\def\ve#1{\mathchoice{\mbox{\boldmath$\displaystyle\bf#1$}}
	{\mbox{\boldmath$\textstyle\bf#1$}}
	{\mbox{\boldmath$\scriptstyle\bf#1$}}
	{\mbox{\boldmath$\scriptscriptstyle\bf#1$}}}
\newcommand\vep{{\ve p}}
\newcommand\vex{{\ve x}}
\newcommand\vey{{\ve y}}
\newcommand\vez{{\ve z}}
\newtheorem{lemma}{Lemma}
\newtheorem{theorem}{Theorem}
\newtheorem{definition}{Definition}
\newtheorem{definition*}{Definition}
\newtheorem{claim}{Claim}
\newtheorem{observation}{Observation}
\title{Local Differential Privacy Meets Computational Social Choice - \\Resilience under Voter Deletion}
\author{
Liangde Tao$^1$
\and
Lin Chen$^2$\footnote{Corresponding Author.}\and
Lei Xu$^3$\And
Weidong Shi$^4$
\affiliations
$^1$Zhejiang University, China\\
$^2$Texas Tech University, US\\
$^3$Kent State University, US\\
$^4$University of Houston, US
\emails
\{vast.tld,chenlin198662,xuleimath\}@gmail.com,
larryshi@ymail.com
}
\begin{document}

\maketitle

\begin{abstract}
The resilience of a voting system has been a central topic in computational social choice. Many voting rules, like plurality, are shown to be vulnerable as the attacker can target specific voters to manipulate the result. What if a local differential privacy (LDP) mechanism is adopted such that the true preference of a voter is never revealed in pre-election polls? In this case, the attacker can only infer stochastic information about a voter's true preference, and this may cause the manipulation of the electoral result significantly harder. The goal of this paper is to provide a quantitative study on the effect of adopting LDP mechanisms on a voting system. We introduce the metric PoLDP (power of LDP) that quantitatively measures the difference between the attacker's manipulation cost under LDP mechanisms and that without LDP mechanisms. The larger PoLDP is, the more robustness LDP mechanisms can add to a voting system. We give a full characterization of PoLDP for the voting system with plurality rule and provide general guidance towards the application of LDP mechanisms.
\end{abstract}

\section{Introduction}
%We study the effect of local differential privacy (LDP) in computational social choice. 
In this paper, we consider the attack on an abstract voting system. We compare the manipulation cost under two scenarios of the voting system: the classical voting system, and the voting system where a local differential privacy (LDP) mechanism is adopted. Our goal is to quantify the difference brought by LDP schemes to a voting system under electoral manipulation.
%In this paper, we consider the attack to an abstract voting system and compare two scenarios, the standard scenario where the attacker knows the true preference of all voters, and the scenario where a local differential privacy (LDP) mechanism is adopted and the attacker does not know the true preference of all voters. Our goal is to quantify the difference brought by LDP schemes to a voting system under electoral manipulation. 

Extensive research has been conducted towards characterizing the resilience of various voting rules (see, e.g., Brandt et al.~\shortcite{brandt2016handbook} for a survey). In many of these research works, the attacker is assumed to know the true preference of voters through pre-election polls. Many voting rules, especially fundamental rules like {plurality}, are found to be vulnerable~\cite{faliszewski2009hard}. Can we enhance the resilience of a voting system by preventing the attacker from learning the preference of voters? Towards this, we observe that LDP mechanisms are a state-of-the-art approach to mitigating the leakage of sensitive information.

LDP mechanism serves as an important method in social science, particularly in voter turnout reports~\cite{rosenfeld2016empirical}. However, most existing research works on LDP focus on the analysis of each user’s privacy, not much is known from the aspect of the system's security (see, e.g., Bebensee~\shortcite{bebensee2019local} as a survey).  Does the introduction of LDP mechanisms make the system more resilient? The most relevant research in this direction is a very recent paper %by Cheu et al.~
\cite{cheu2021}, where they showed that an attacker may manipulate a small percentage of users in an LDP mechanism to mislead the estimation of distribution parameters. However, their attack model is completely different from what we study in this paper.

%as they assume that the attacker knows the private information of users and seeks to mislead the estimation, but we assume the attacker does not know the private information of users (which is the preference of voters) and seeks to manipulate the electoral result.    

%While it is generally believed that the adoption of LDP schemes adds value to the resilience of a voting system under 

 We believe characterization of the resilience of a voting system under LDP mechanisms is a natural question that is worth investigating. The high-level description of the specific model we study in this paper is given as follows:

\begin{itemize}
    \item {\it Election setting:} We consider the classical election where there are a set of candidates and a set of independent voters. We focus on the plurality rule (e.g., each voter votes for exactly one candidate). For each voter, we define the type\footnote{For each voter, the type of his/her preference is private information, and will be referred to as the true type to distinguish from the reported type which is generated by the LDP mechanism.} of his/her preference as the candidate he/she votes for. Candidate(s) receiving the highest score will be the co-winner(s). 
    
    \item {\it LDP mechanism:} Every voter will locally and independently run a given LDP mechanism once to generate his/her reported preference which will be used for all pre-election polls. This is a common approach, see e.g., Google's RAPPOR~\cite{erlingsson2014rappor}. Otherwise, the attacker may be able to figure out the voter's true preference via his/her multiple reported preferences~\cite{bebensee2019local}. In this paper, we focus on two fundamental LDP mechanisms: {\it randomized response} and {\it Laplace}.
    %such that the generated reported preference will have a certain probability to deviate from the true preference. We will be particularly focusing on two fundamental LDP mechanisms, {\it Randomized Response} and {\it Laplace}.
%Every voter will locally and independently flip a (biased) coin: with probability $\theta$, he/she uses his/her true preference as the reported preference; with probability $1-\theta$, he/she selects an arbitrary preference list uniformly at random to be the reported preference. In any pre-election event like polls or surveys, the voter always responds according to the reported preference. 
    \item {\it Attack model:} We consider electoral control by deleting voters. The {\it manipulation cost} is the minimal number of voters the attacker should delete to make the designated candidate win.  
    %More precisely, we assume there is an attacker who knows the reported preference of all voters and can delete voters such that the designated candidate preferred by the attacker can become one co-winner. 
\end{itemize}

We emphasize that the adoption of LDP mechanisms will {\em not} change the electoral result since the mechanisms are only used in pre-election polls. Voters will still vote according to their true preference in the election, but will respond using the reported preference in any pre-election polls. Therefore when the attacker launches the voter deletion attack prior to the election, he/she only knows the reported preference of voters instead of their true preference.

\smallskip
\noindent\textbf{Measuring the impact of LDP mechanisms.} In the classic election setting, the attacker knows the true preference of all voters. Whereas, after adopting the LDP mechanism, the attacker only knows the reported preference. 

Inspired by the concept of PoA (price of anarchy) in algorithmic game theory~\cite{koutsoupias1999worst}, we introduce the metric PoLDP (power of LDP) to quantitatively characterize the resilience brought by the LDP mechanism. Roughly, we can view PoLDP as follows: $$\textrm{PoLDP}=\frac{\textrm{Minimal expected manipulation cost with LDP}}{\textrm{Minimal manipulation cost without LDP}}.$$

If $\textrm{PoLDP}> 1$, then it means the introduction of LDP mechanisms indeed increases the manipulation cost of the attacker, and thus enhances the resilience of the system; if $\textrm{PoLDP}< 1$, then it means the introduction of LDP mechanisms reduces the manipulation cost, and thus diminishes the resilience of the system. It should be noted that the value of $\textrm{PoLDP}$ depends on the specific LDP mechanism as well as the value of the privacy parameter\footnote{See Definition~\ref{def:LDP} for the meaning of the privacy parameter $\epsilon$.}. 

As randomness is involved in an LDP mechanism, the manipulation cost under the LDP mechanism is a random variable, and our definition uses its expected value. One may ask what if this random variable takes a value significantly different from the expectation? Indeed, we will show that since the manipulation cost can be expressed as a summation of independent binary variables, probabilistic analysis guarantees that the manipulation cost as a random variable lies around its expectation with an extremely high probability, %(see the section ``Central Realization" for details), %that is, if we define an alternative notion, $\overline{\textrm{PoLDP}}$, as the ratio between the minimal manipulation cost with LDP to the minimal manipulation cost without LDP, then $\overline{\textrm{PoLDP}}$ is a random variable satisfying that  $(1-\delta)\textrm{PoLDP}\le \overline{\textrm{PoLDP}}\le (1+\delta)\textrm{PoLDP}$ with high probability for arbitrary small $\delta>0$. 
and consequently our definition of $\textrm{PoLDP}$ by using the expected value is without loss of generality.

\smallskip
\noindent\textbf{Our contributions.}
The main contribution of this paper is to give the first quantitative analysis of the effect brought by local differential privacy in a voting system under electoral control of voter deletion. We study two major LDP mechanisms that are widely adopted, randomized response and Laplace, and show that they can generally enhance the resilience of a voting system. We quantify such an effect through a measure called PoLDP. The larger PoLDP is, the more resilience LDP adds to the system.

Since LDP mechanisms introduce uncertainty, the attacker may need to pay a different manipulation cost to make sure the designated candidate wins with a different probability. That is, the manipulation cost, and consequently PoLDP, is a function of the winning probability.

For specific winning probability, we establish two integer linear programs to give a general upper and lower bound on the manipulation cost. Under some mild assumptions, we show that the upper and lower bound approach to the same value for a big range of the winning probability. That is the manipulation cost for a winning probability of $99.9\%$ is almost the same as that for a winning probability of $0.1\%$. 

Using probabilistic analysis, we give an efficient method to calculate the APoLDP (Asymptotic PoLDP) for voting systems satisfying certain conditions. Furthermore, we study the relationship between the value of APoLDP and the parameter $\epsilon$ of LDP mechanisms. For randomized response and Laplace mechanism, we give the closed-form expression of APoLDP for voting systems with only two candidates. For voting systems with multiple candidates, it becomes too complicated to obtain a simple mathematical formula. Instead, we analyze the maximal value of APoLDP, and the range of $\epsilon$ where APoLDP achieves the maximal value.

Interestingly, we observe that when the parameter $\epsilon$ of LDP mechanisms is below a certain threshold (we call it security threshold), APoLDP stays at its maximal value. Generally, LDP mechanisms with smaller $\epsilon$ can guarantee better privacy. But for manipulation via voter deletion, such kind of extra privacy provided by LDP mechanisms is redundant and can not add more resistance. The security threshold provides general guidance toward the application of LDP mechanisms.

\smallskip
\noindent\textbf{Related works.}
Local differential privacy has been increasingly accepted as a state-of-the-art approach for statistical computations while protecting the privacy of each participant. It was first formalized in~\cite{kasiviswanathan2011can}. Several important LDP mechanisms are studied and compared in the literature~\cite{wang2017locally,bun2019heavy,qin2016heavy}, including the two major LDP mechanisms, randomized response and Laplace considered in this paper. LDP mechanisms have also been adopted by IT companies, e.g., RAPPOR by Google Chrome~\cite{erlingsson2014rappor}. We refer the reader to a comprehensive survey~\cite{bebensee2019local}.

The study of the computational complexity of electoral control was initiated by Bartholdi III et al.~\shortcite{bartholdi1992hard}, who mainly analyzed the voting rule of {plurality}
and {condorcet}, where the attacker's goal is to make a designated candidate win. Later, Hemaspaandra et al.~\shortcite{hemaspaandra2007anyone} studied a closely related model where the attacker's goal is to make the original winner lose. Following their research work, extensive research has been conducted towards understanding the resilience of voting systems under different electoral control methods and voting rules~\cite{hemaspaandra2017complexity,magiera2017hard,maushagen2016complexity,rey2016structural}. While the research has characterized different voting rules as resilient or vulnerable, not much is known about protecting a vulnerable voting rule like {plurality}. {Electoral control under partial information is also considered in the literature~\cite{conitzer2011dominating,dey2018complexity}.} Very recently, Chen et al.~\shortcite{chen2018protecting} and Yin et al.~\shortcite{yin2018optimal} studied the protection of election through deploying defending resources.  

\section{Preliminary}
In this section, we briefly introduce the concept of local differential privacy mechanism. \begin{definition}{$\epsilon$-Local Differential Privacy:} \label{def:LDP}
We say that an mechanism $\mathcal{R}$ satisfies $\epsilon$-local differential privacy where $\epsilon>0$ if and only if for any input $v, v'$ and any $y\in \text{range}(\mathcal{R})$ it holds that $$\frac{\Pr[\mathcal{R}(v)=y]}{\Pr[\mathcal{R}(v')=y]}\leq e^{\epsilon}.$$
\end{definition}

\vspace{1mm}
{We call $\mathcal{R}$ an $\epsilon$-LDP mechanism if it satisfies the $\epsilon$-local differential privacy.} The privacy parameter $\epsilon$ characterizes the level of privacy provided by the LDP mechanism $\mathcal{R}$. For example, perfect privacy is ensured when $\epsilon=0$. And, no privacy is guaranteed when $\epsilon=+\infty$. 

As mentioned before, the type of true preference of each voter (we refer to it as true type) is private information. Voters will vote according to their true types in the election. In the meantime, we let every voter $V_i$ independently and locally run an $\epsilon$-LDP mechanism (which is the randomized response or Laplace in this paper) to generate a type of his/her reported preference (we refer to it as reported type) and use the reported type in all pre-election polls. Consequently, the reported type of each voter is public information.

\smallskip
\noindent\textbf{Design Matrix:}
We now introduce the design matrix of an LDP mechanism. In our problem, the design matrix maps the true type to the reported type for each voter. Consider an arbitrary voter $V_i$. Let $t_i$ be the true type, and $r_i$ be the reported type generated by $\epsilon$-LDP mechanisms. There are $m$ candidates in the voting system. Consequently, there are $m$ different types under {plurality}. Hence, we know that $t_i, r_i\in[1,m]$.

Consider the probability that $V_i$ has a true type $v$ but reports $u$, i.e., the event that its reported type is $u$ conditioned on the event that its true type is $v$, or $\Pr[r_i=u| t_i=v]$. It is clear that since each voter independently and locally runs the LDP mechanism, $\Pr[r_i=u| t_i=v]$ is the same for all voters. Hence, we denote by $p_{uv}=\Pr[r_i=u| t_i=v]$, and let  $P=(p_{uv})_{m\times m}$. $P$ is called the design matrix and is only dependent on the LDP mechanism.

We consider two major LDP mechanisms, randomized response and Laplace. Given the privacy parameter $\epsilon$, we denote by $P^{\epsilon\text{-ran}}=(p_{uv}^{\epsilon\text{-ran}})_{m\times m}$ and $P^{\epsilon\text{-lap}}=(p_{uv}^{\epsilon\text{-lap}})_{m\times m}$ the design matrices for randomized response and Laplace, respectively. According to Wang et al.~\shortcite{wang2016using}, the design matrices are given by:

\begin{equation}\label{eq:design-matrix}
p^{\epsilon\text{-ran}}_{uv}=
\begin{cases}
\theta+(1-\theta)/m & \text{if}\ u=v \\
(1-\theta)/m & \text{if}\ u\neq v \\
\end{cases} 
\end{equation}
where $\theta=1-m/(m-1+e^{\epsilon})$, and

\begin{equation}\label{eq:design-matrix-lap}
p^{\epsilon\text{-lap}}_{uv}\!=\!\!
\begin{cases}
F_{v,\frac{m-1}{\epsilon}}(\frac{3}{2}) & \text{if}\ u=1 \\
1\!-\!F_{v,\frac{m-1}{\epsilon}}(m\!-\!\frac{1}{2}) & \text{if}\ u=m \\
F_{v,\frac{m-1}{\epsilon}}(u\!+\!\frac{1}{2})\!-\!F_{v,\frac{m-1}{\epsilon}}(u\!-\!\frac{1}{2}) & \text{otherwise}  \\
\end{cases}
\end{equation}
where $F_{\mu,b}(x)=\frac{1}{2}+\frac{1}{2}sgn(x-\mu)(1-e^{(-\frac{|x-\mu|}{b})})$ denotes the cumulative distribution function of Laplace distribution with mean $\mu$ and the variance $2b^2$.

\smallskip
\noindent\textbf{From Reported Type to True Type:} Consider an arbitrary voter $V_i$. Again, let $t_i$ be its true type, and $r_i$ be its reported type. If we observe that $V_i$ reports $v$ as its type, what is the probability that its true type is $u$? More precisely, we consider $\Pr[t_i=u| r_i=v]$.
As each voter independently and locally runs the LDP mechanism, $\Pr[t_i=u| r_i=v]$ is the same for all voters. Hence, we denote by $q_{uv}=\Pr[t_i=u| r_i=v]$, and let  $Q=(q_{uv})_{m\times m}$.

The probabilities $p_{uv}$ and $q_{uv}$ are connected through Bayesian formulation. Let $\lambda_j$ be the fraction of voters whose true type is $j$ (i.e., there are $n\lambda_j$ voters whose true type is $j$). We assume $\lambda_j$'s are fixed positive values and denote $\vep_i=(p_{i1},p_{i2},\dots,p_{im})$ as the $i$-th row of the design matrix $P$. Then, we have
\begin{equation}\label{calc:quv}
    q_{uv}=\frac{\Pr[r_i=v, t_i=u]}{\sum_{u'} \Pr[r_i=v | t_i=u']\Pr[t_i=u']}=\frac{p_{vu}\lambda_u}{\boldsymbol{\lambda}\cdot{\vep_v}}.
\end{equation}

It is easy to see that $p_{uv}$'s and $q_{uv}$'s are independent of the number of voters $n$.

%Given a fixed LDP mechanism and a fixed number of candidates, $p_{uv}$'s and $q_{uv}$'s are fixed, which are independent of the number of voters $n$.

%\noindent\textbf{Some Notations.} We write column vectors in boldface, e.g., ${\ve \lambda}$, and their entries in normal font, e.g., $\lambda_j$. For two vectors $\vex,\vey$ of the same dimension, we write $\vex\cdot\vey$ for their inner product.

\section{Formal Definition of PoLDP}

%\begin{definition}{Differential Privacy:}
%A randomized function $\mathcal{K}$ gives $\epsilon$-differential privacy if for all data sets $C_1$ and $D_2$ differing at most one element, and all $S\subseteq \text{Range}(\mathcal{K})$,
%$$\frac{Pr[\mathcal{K}(C_1)\in S]}{Pr[\mathcal{K}(D_2)\in S]} \leq e^\epsilon$$
%\end{definition}

The goal of this paper is to quantitatively analyze the effect of LDP mechanisms when applied to voting systems. Towards this, we consider two scenarios, the classical scenario without LDP and the new scenario with LDP, and compare the manipulation costs of the attacker under these two scenarios. Below we give details.  

We denote $\mathcal{S}_{m,\boldsymbol\lambda}(n)$ as the voting system which contains $n$ voters namely $\{V_1,\dots,V_n\}$, and $m$ candidates namely $\{C_1,\dots,C_m\}$. The parameter $\boldsymbol{\lambda}=(\lambda_1,\dots,\lambda_m)$ which denotes there are $n\lambda_i$ voters whose true type is $i$. The deletion cost for each voter is unit. Hence, there is no need to distinguish two voters who have the same true type. 

There is an attacker who tries to make a designated candidate win via voter deletion. Without loss of generality, we assume that the designated candidate is $C_1$.

In the classical scenario without LDP mechanisms, the attacker knows the true type of voters\footnote{This is a common assumption in computational social choice, see, e.g., Brandt et al.~\shortcite{brandt2016handbook}.} (e.g., the parameter $\boldsymbol{\lambda}$ of the voting system), and can thus strategically delete voters (e.g., decide the number of voters that need to be deleted for each kind of true type) to make the designated candidate $C_1$ be one of the co-winners. Hence, for the voting system $\mathcal{S}=\mathcal{S}_{m,\boldsymbol\lambda}(n)$, we define the minimal number of voters the attacker should delete as the {\it manipulation cost} of the attacker, and define it as $f(\mathcal{S})$.

\begin{figure}[!h]
	\begin{center}
		\includegraphics[width=75.66mm,height=17mm]{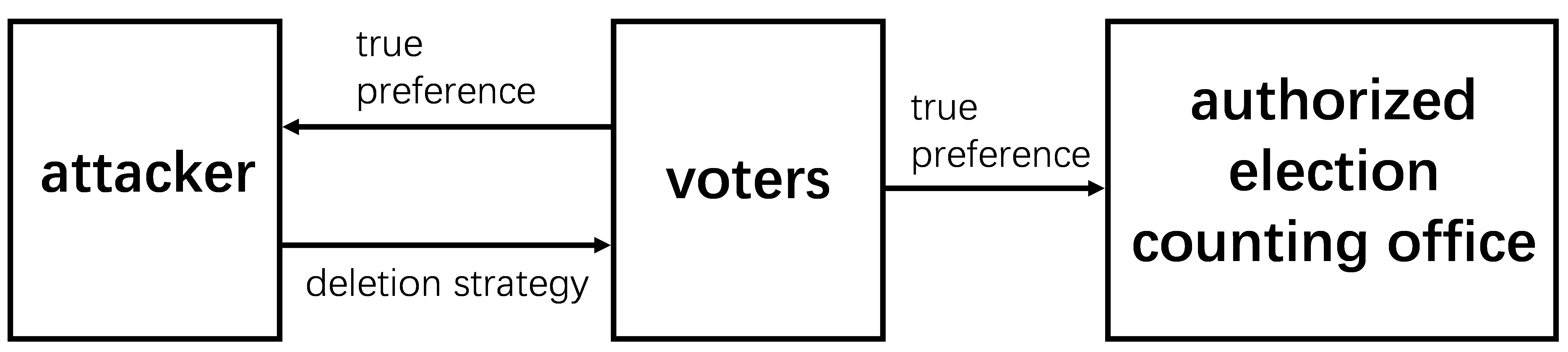}
		\caption{Illustration of Manipulation without LDP}
		\label{fig:1}	
	\end{center}
\end{figure}
\begin{figure}[!h]
	\begin{center}
		\includegraphics[width=81.77mm,height=17mm]{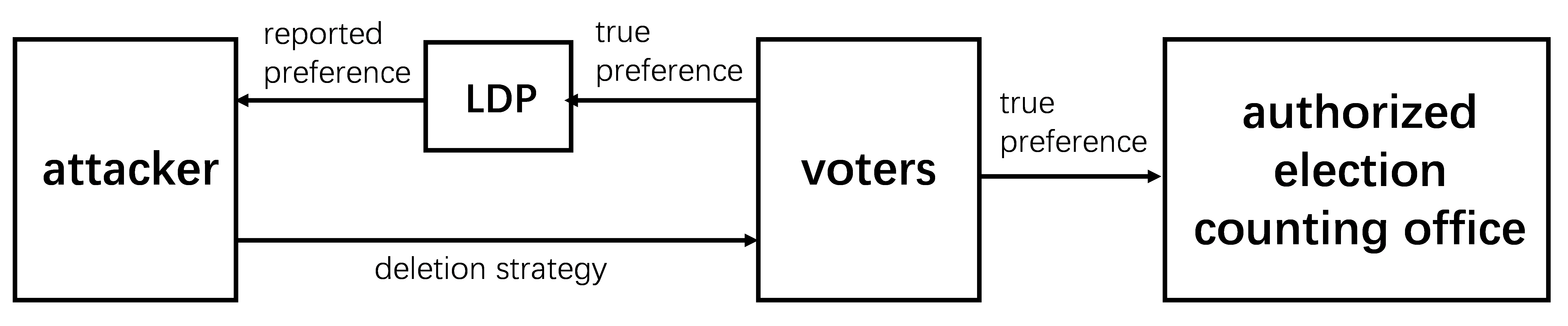}
		\caption{Illustration of Manipulation with LDP}
		\label{fig:2}	
	\end{center}
\end{figure}

Consider the scenario where an LDP mechanism $\mathcal{R}$ is adopted. We also assume the attacker knows the parameter $\boldsymbol{\lambda}$ of the voting system\footnote{For the rationality of this assumption, please refer to Section ``Discussion''.}. But, the attacker needs to delete voters according to reported types instead of true types. Adopting the LDP mechanism introduces two kinds of uncertainties when considering the manipulation problem, as we discuss below. 

\smallskip
\noindent{\it $\bullet$ Realizations:} Recall that there are $n\lambda_i$ voters whose true type is $i$. We know that it is fixed for the voting system $\mathcal{S}_{m,\boldsymbol\lambda}(n)$. However, the number of voters whose reported type is $i$, may be different\footnote{To understand the effect of an LDP mechanism to a fixed voting system on average, we consider the independent application of the same LDP mechanism multiple times, and each time we may observe a different $\boldsymbol{\tau}$.}. Let $n\tau_i$ be the number of voters whose reported type is $i$ where $1\le i\le m$, and let $\boldsymbol{\tau}=(\tau_1,\dots,\tau_m)$. Then, we know that $$\boldsymbol{\tau}\in \mathcal{P}:=\{\vez=(z_1,\cdots,z_m)\in [0,1]^m:\sum_{i=1}^m z_i=1\}.$$
For any $\vez\in\mathcal{P}$, $\boldsymbol{\tau}=\vez$ is called a {\it realization}, and will occur with the probability $\Pr[\boldsymbol{\tau}=\vez]$ that depends on the LDP mechanism $\mathcal{R}$.

\smallskip
\noindent{\it $\bullet$ Winning Probability:} 
The attacker observes a realization $\boldsymbol{\tau}$ and decides the number of voters that need to be deleted for each reported type. More precisely, a feasible solution to the attacker under a fixed realization $\boldsymbol{\tau}$ can be represented as a vector $\vex=(x_1,x_2,\cdots,x_{m})$ where $x_i\le n\tau_i$ is the number of deleted voters whose reported type is $i$. As voters of the same reported type are indistinguishable from the attacker, we assume the attacker will randomly delete $x_i$ out of $n\tau_i$ voters of reported type $i$. Among these $x_i$ voters, the number of voters whose true type is $i'$ is a stochastic value for each $i'$. Consequently, the attacker may or may not succeed in making $C_1$ win. In other words, any feasible solution $\vex$ to the attacker is associated with a {\it winning probability} which indicates the probability that $C_1$ becomes a co-winner if the attacker deletes arbitrary $x_i$ voters whose reported type is $i$, given the observation $\boldsymbol{\tau}$. 

Above all, for the voting system $\mathcal{S}_{m,\boldsymbol\lambda}(n)$, we define the manipulation cost of the attacker under the LDP mechanism as follows:

\begin{definition}
The manipulation cost of a voting system $\mathcal{S}=\mathcal{S}_{m,\boldsymbol\lambda}(n)$ under an LDP mechanism $\mathcal{R}$, given a winning probability $\xi$ and a realization $\boldsymbol{\tau}$, is the minimal number of voters that need to be deleted to make the designated candidate one co-winner with a probability at least $\xi$, and is denoted by $f(\mathcal{S},\mathcal{R},\xi:\boldsymbol{\tau})$. 
\end{definition}

As the mechanism $\mathcal{R}$ defines a distribution of $\boldsymbol{\tau}$ over $\mathcal{P}$, we can define the expectation of $f(\mathcal{S},\mathcal{R},\xi:\boldsymbol{\tau})$ over all the realizations as below.

\begin{definition}
The expected manipulation cost of a voting system $\mathcal{S}=\mathcal{S}_{m,\boldsymbol\lambda}(n)$ under an LDP mechanism $\mathcal{R}$, given a winning probability $\xi$, is the expected minimal number of voters that need to be deleted to make the designated candidate one co-winner with a probability at least $\xi$ over all the realizations, and is denoted by $f(\mathcal{S},\mathcal{R},\xi)=\mathbb{E}_{\boldsymbol{\tau}}[f(\mathcal{S},\mathcal{R},\xi:\boldsymbol{\tau})]$. 
\end{definition}

Compare the manipulation cost of the attacker in the above two scenarios (with or without LDP), the introduction of LDP may cause the cost to increase or decrease, and we measure such an increase or decrease through PoLDP, interpreted as the power of LDP. More precisely, 
\begin{definition}
The PoLDP for a voting system $\mathcal{S}=\mathcal{S}_{m,\boldsymbol\lambda}(n)$ under an LDP mechanism $\mathcal{R}$ is defined as $$\text{PoLDP}(\mathcal{S},\mathcal{R},\xi)=f(\mathcal{S},\mathcal{R},\xi)/f(\mathcal{S}).$$
\end{definition}

We are interested in the value of PoLDP when the number of voters in the voting system is sufficiently large. Hence, we define the APoLDP (Asymptotic PoLDP).

\begin{definition}\label{def:poldp-s}
The APoLDP for a voting system $\mathcal{S}=\mathcal{S}_{m,\boldsymbol\lambda}(n)$ under an LDP mechanism $\mathcal{R}$ is defined as $$\text{APoLDP}(\mathcal{S},\mathcal{R},\xi)=\lim\limits_{n\to \infty}f(\mathcal{S},\mathcal{R},\xi)/f(\mathcal{S}).$$
\end{definition}

\section{Computing APoLDP}
%The goal of this section is to provide a systematic method for computing PoLDP. 

The definition of PoLDP (or APoLDP) relies on computing $f(\mathcal{S},\mathcal{R},\xi)$, and $f(\mathcal{S})$. In the classic scenario without LDP mechanisms, the manipulation cost of the attacker is given by $$f(\mathcal{S})=\sum_{j=2}^m\max\{0,n\lambda_j-n\lambda_1\}.$$

In scenario where an LDP mechanism $\mathcal{R}$ is adopted, the manipulation cost of the attacker $f(\mathcal{S},\mathcal{R},\xi)$ relies on computing $f(\mathcal{S},\mathcal{R},\xi:\boldsymbol{\tau})$, which is a random variable whose distribution is very complicated. We prove that under certain conditions, the value of APoLDP is robust to the winning probability $\xi$, and can be calculated via a linear program. Specifically, we have the following theorem:

\begin{theorem}\label{thm:main-new}
If the voting system $\mathcal{S}=\mathcal{S}_{m,\boldsymbol\lambda}(n)$ under an LDP mechanism $\mathcal{R}$ satisfies that for any $1\le j\le m$ %and $2\le i\le m$, if $q_{ij}-q_{1j}>0$, 
either $q_{ij}-q_{1j}>0$ for all $2\le i\le m$, or $q_{ij}-q_{1j}<0$ for all $2\le i\le m$, then for any $\xi\in(0,1)$ it holds that 
$$\text{APoLDP}(\mathcal{S},\mathcal{R},\xi)=\frac{\text{OPT}_{\text{LP}(\boldsymbol{\hat{\tau}})}}{\sum_{j=2}^m\max\{0,n\lambda_j-n\lambda_1\}},$$
where $\hat{\tau}_i=\mathbb{E}[\tau_i]=\sum_{j=1}^m \lambda_jp_{ij}.$
\end{theorem}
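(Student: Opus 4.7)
The plan is to bridge the stochastic manipulation cost $f(\mathcal{S},\mathcal{R},\xi:\boldsymbol{\tau})$ to a deterministic linear program using concentration of measure. First, given a realization $\boldsymbol{\tau}$, if the attacker deletes $x_j$ voters uniformly at random from the pool of size $n\tau_j$ with reported type $j$, Bayes' rule implies that the fraction of that pool with true type $i$ concentrates at $q_{ij}$, so the expected number of true-type-$i$ voters removed is $x_j q_{ij}+o(n)$ with high probability. Summing over $j$ and subtracting from $n\lambda_i$ gives the score of $C_i$ after deletion up to a Chernoff-controlled fluctuation, and the natural deterministic relaxation $\text{LP}(\boldsymbol{\hat{\tau}})$ minimizes $\sum_j x_j$ subject to $\sum_j x_j(q_{ij}-q_{1j})\ge n(\lambda_i-\lambda_1)$ for every $i\ge 2$ and $0\le x_j\le n\hat{\tau}_j$.

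Next, I would establish matching bounds for any fixed $\xi\in(0,1)$. For the upper bound, take an optimal LP solution $\vex^*$ and inflate each coordinate by $O(\sqrt{n\log n})$; a Chernoff/Hoeffding argument shows that every pairwise score gap $\mathrm{score}(C_1)-\mathrm{score}(C_i)$ is non-negative except with probability $o(1/m)$, and a union bound drives the winning probability toward $1$, giving $f(\mathcal{S},\mathcal{R},\xi:\boldsymbol{\tau})\le \mathrm{OPT}_{\mathrm{LP}(\boldsymbol{\tau})}+O(\sqrt{n\log n})$ with high probability. For the lower bound, any $\vex$ with $\sum_j x_j\le \mathrm{OPT}_{\mathrm{LP}(\boldsymbol{\tau})}-\delta n$ must violate some LP constraint by $\Omega(n)$; the corresponding expected score gap against $C_1$ is then $\Omega(n)$ in the wrong direction, and concentration gives failure probability $1-o(1)$, contradicting the required winning probability $\xi>0$.

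Then, I would show $\mathrm{OPT}_{\mathrm{LP}(\boldsymbol{\tau})}=\mathrm{OPT}_{\mathrm{LP}(\boldsymbol{\hat{\tau}})}(1+o(1))$ with probability $1-o(1)$. Since $n\tau_j=\sum_i N_{ij}$ with $N_{ij}$ a binomial of mean $n\lambda_i p_{ji}$, a Chernoff bound gives $|\tau_j-\hat{\tau}_j|=O(\sqrt{(\log n)/n})$ uniformly in $j$, so the LP box constraints move by only $o(n)$ while the constraint-matrix coefficients $q_{ij}-q_{1j}$ are deterministic; standard LP sensitivity transfers this perturbation into a $o(n)$ change in the optimum. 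Combining the three steps, taking expectation over $\boldsymbol{\tau}$, and dividing by $f(\mathcal{S})=\sum_{j\ge 2}\max\{0,n\lambda_j-n\lambda_1\}$ yields the claimed formula for $\mathrm{APoLDP}$ after letting $n\to\infty$.

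The hardest step is the lower bound for small $\xi$: one must rule out ``gambler's strategies'' that use fewer than $\mathrm{OPT}_{\mathrm{LP}(\boldsymbol{\tau})}-\Omega(n)$ deletions in the hope that favourable stochastic fluctuations make $C_1$ win anyway. The monotonicity hypothesis on $q_{ij}-q_{1j}$ is exactly what enables this, because without it deleting from one reported pool could help $C_1$ against some $C_i$ while hurting against another, so an LP-infeasible strategy might incidentally satisfy all realised score constraints through cancellations. Under the hypothesis each reported type uniformly either aids or harms $C_1$ against every opponent, hence any violated LP constraint forces an $\Omega(n)$ expected deficit that $O(\sqrt{n})$ fluctuations cannot overcome, closing the argument.
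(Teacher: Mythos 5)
Your proposal follows essentially the same route as the paper: bound $f(\mathcal{S},\mathcal{R},\xi:\boldsymbol{\tau})$ above and below by Chernoff arguments around perturbed versions of $\text{LP}(\boldsymbol{\tau})$, use the sign hypothesis on $q_{ij}-q_{1j}$ to show the perturbed optima converge to $\text{OPT}_{\text{LP}(\boldsymbol{\tau})}$, and then concentrate $\boldsymbol{\tau}$ around $\hat{\boldsymbol{\tau}}$ with an LP-sensitivity step (the paper's Lemmas~\ref{lemma:upper}--\ref{lemma:domain-1}). The only part you compress is the quantitative LP-stability claim that saving $\delta n$ deletions forces a constraint violation of order $\delta n$ rather than merely some violation; this is exactly the content of the paper's Lemma~\ref{lemma:stable-condition}, and your identification of the sign condition as what makes it work is correct.
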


The proof of Theorem~\ref{thm:main-new} is a little bit complex, we detail it in the following sections:
\begin{itemize}
    \item In subsection ``Upper and Lower Bounding $f(\mathcal{S},\mathcal{R},\xi:\boldsymbol{\tau})$ through ILP'', we will establish two integer linear programs $\overline{\text{ILP}}(\boldsymbol{\tau},\delta)$ and $\underline{\text{ILP}}(\boldsymbol{\tau},\delta)$ which gives the upper and lower bounds of $f(\mathcal{S},\mathcal{R},\xi:\boldsymbol{\tau})$.
    
    \item In subsection ``Robustness – a Sufficient Condition", we show that if certain condition holds, then the value of $f(\mathcal{S},\mathcal{R},\xi:\boldsymbol{\tau})$ is robust to the winning probability $\xi$. Specifically, we prove that for any $\epsilon\in(0,1)$, the value of $f(\mathcal{S},\mathcal{R},1-\epsilon:\boldsymbol{\tau})$ can be calculated via linear program $\text{LP}(\boldsymbol{\tau})$ when the number of voters is sufficiently large.
    
    \item In subsection ``Central Realization", we show a efficient way to calculate $f(\mathcal{S},\mathcal{R},\xi)$. Recall the definition, we need to compute the optimal objective value of linear program $\text{LP}(\boldsymbol{\tau})$ for each fixed realization $\boldsymbol{\tau}$, and then taking the expectation over $\boldsymbol{\tau}$. We prove that if certain condition holds, we can first compute the  expectation of $\boldsymbol{\tau}$ (denoted as $\hat{\boldsymbol{\tau}}$), and then compute the optimal objective value of integer program $\text{LP}(\boldsymbol{\hat{\boldsymbol{\tau}}})$.
\end{itemize}

\subsection{Upper and Lower Bounding $f(\mathcal{S},\mathcal{R},\xi:\boldsymbol{\tau})$ through ILP}
Given a fixed value $\xi$ and observation $\boldsymbol{\tau}$, the computation of $f(\mathcal{S},\mathcal{R},\xi:\boldsymbol{\tau})$ is very complicated, which makes it very difficult to determine whether $\mathcal{S}$ is stable or not. Towards this, we will introduce two integer linear programs (ILPs) which can be used to compute an upper bound and lower bound for $f(\mathcal{S},\mathcal{R},\xi:\boldsymbol{\tau})$, respectively. 

%The local differential privacy scheme is a randomized scheme where each voter generates his/her reported preference based on the same design matrix. What the attacker observed about the reported preference is only one scenery among all the possibilities. The attacker only cares about the total number of voters need to be deleted. Hence, the attacker does not need to distinguish between two voters with the same reported preference. We denote $\boldsymbol{\tau}=(\tau_1,\dots,\tau_m)$ as the statistic of observed reported preference where $\tau_i$ denotes the percentage of voters whose reported preference is candidate $i$. In another word, the attacker observed that there are $n\lambda_i$ voters whose reported preference is candidate $i$.

%Based on the statistic $\boldsymbol{\tau}$ of the reported preference, we can calculate $f(\mathcal{S},\mathcal{R},\xi:\boldsymbol{\tau})$ using the following integer linear programming $\overline{\text{ILP}}(\boldsymbol{\tau},\delta)$.

Let $x_j$ be a variable that denotes the number of voters with reported type $j$ that need to be deleted. Recall that for any voter, $q_{ij}$ denotes the probability that its true type is $i$ conditioned on that its reported type is $j$. Consequently, if we delete $x_j$ voters with reported type $j$, then in expectation we delete $q_{ij}x_j$ voters with true type $i$. Denote by $\hat{\Gamma}_i$ the expected number of voters with true type $i$ after voter deletion. 

Our first observation is that, if $\hat{\Gamma}_1-\hat{\Gamma}_i$ is sufficiently large, then the designated candidate shall win with sufficiently high probability. Specifically, we consider the following  $\overline{\text{ILP}}(\boldsymbol{\tau},\delta)$: \begin{subequations}
\begin{eqnarray}
&\hspace{-15mm}\min&\hspace{-5mm} \sum_{i=1}^m x_i \nonumber\\
&\hspace{-15mm}s.t.&  \hspace{-5mm} \hat{\Gamma}_i=n\lambda_i-\sum_{j=1}^m q_{ij}x_j \qquad\qquad\; \forall\;
1\le i \le m \label{uip:1}\\
&& \hspace{-5mm} \hat{\Gamma}_1\geq \hat{\Gamma}_i +  \delta\sum_{j=1}^m (q_{1j}+q_{ij})x_j  \quad \forall\; 2\le i\le m \label{uip:2}\\
&& \hspace{-5mm} x_i\leq n\tau_i  \qquad\qquad\qquad\qquad\quad\;\, \forall\;
1\le i \le m \label{uip:3}\\
&& \hspace{-5mm} x_i\in \mathbb{N}  \nonumber
\end{eqnarray}
\end{subequations}
%where the decision variable $x_i$ denotes the number of voters need to be deleted whose reported preference is candidate $i$.

%We explain the constraints: Eq~\eqref{} represent the value of $\hat{\Gamma}_i$ (an estimation of $\Gamma_i$); Eq~\eqref{} represents the winning condition of candidate $i$ which is derived from Lemma~\ref{}.

The optimal objective value of $\overline{\text{ILP}}(\boldsymbol{\tau},\delta)$ gives an upper bound on $f(\mathcal{S},\mathcal{R},\xi:\boldsymbol{\tau})$, as indicated by the following lemma.
\begin{lemma}\label{lemma:upper}
Let $\vex^*=(x_1^*,\dots,x_m^*)$ denote the optimal solution to $\overline{\text{ILP}}(\boldsymbol{\tau},\delta)$. If the attacker deletes arbitrary $x_i^*$ voters whose reported type is $i$, then the winning probability of the designated candidate is at least $1-me^{-\frac{c\delta^2}{3}n}$, where $c= q_{\min}(\lambda_{\max}-\lambda_1)$,$q_{\min}=\min_{i,j}q_{ij}$, $\lambda_{\max}=\max_j\lambda_j$.
\end{lemma}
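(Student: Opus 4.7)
The plan is to bound the failure probability $\Pr[\exists\,i\ge 2:\Gamma_1<\Gamma_i]$, where $\Gamma_i$ denotes the (random) post-deletion score of $C_i$, by applying a multiplicative Chernoff bound to each pairwise event $\{\Gamma_1<\Gamma_i\}$ and then union-bounding over $i$. I would introduce auxiliary random variables $D_{ij}$ for the number of deleted voters with true type $i$ and reported type $j$, and set $B_i=\sum_{j=1}^m D_{ij}$. Then $\Gamma_i=n\lambda_i-B_i$, so $\{\Gamma_1\ge\Gamma_i\}$ is equivalent to $\{B_i-B_1\ge n(\lambda_i-\lambda_1)\}$. By the Bayesian identity~(\ref{calc:quv}), a deleted voter with reported type $j$ has true type $i$ with probability $q_{ij}$, so $\mathbb{E}[B_i]=\sum_j q_{ij}x_j^*$ and $\hat{\Gamma}_i=n\lambda_i-\mathbb{E}[B_i]=\mathbb{E}[\Gamma_i]$; constraint~(\ref{uip:2}) then reads $\mathbb{E}[B_i-B_1]-n(\lambda_i-\lambda_1)\ge\delta(\mathbb{E}[B_1]+\mathbb{E}[B_i])$, giving a clean multiplicative slack of $\delta$ on each side.

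With this slack in hand I would apply the two-sided multiplicative Chernoff bound: the bad events are $B_i\le(1-\delta)\mathbb{E}[B_i]$ and $B_1\ge(1+\delta)\mathbb{E}[B_1]$, of probability at most $e^{-\delta^2\mathbb{E}[B_i]/2}$ and $e^{-\delta^2\mathbb{E}[B_1]/3}$ respectively. If neither event occurs, $B_i-B_1\ge\mathbb{E}[B_i-B_1]-\delta(\mathbb{E}[B_i]+\mathbb{E}[B_1])\ge n(\lambda_i-\lambda_1)$, so $C_1$ ties or beats $C_i$. To convert the two exponents into a bound of the form $e^{-c\delta^2n/3}$, I need $\mathbb{E}[B_i]$ and $\mathbb{E}[B_1]$ to both be $\Omega(n)$. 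This comes directly from the constraint: letting $i^\star$ achieve $\lambda_{i^\star}=\lambda_{\max}$, constraint~(\ref{uip:2}) at $i=i^\star$ forces $\mathbb{E}[B_{i^\star}]\ge\mathbb{E}[B_1]+n(\lambda_{\max}-\lambda_1)\ge n(\lambda_{\max}-\lambda_1)$; since $q_{i^\star j}\le 1$ we get $\sum_j x_j^*\ge\mathbb{E}[B_{i^\star}]\ge n(\lambda_{\max}-\lambda_1)$, so for every $i$ (including $i=1$), $\mathbb{E}[B_i]=\sum_j q_{ij}x_j^*\ge q_{\min}\sum_j x_j^*\ge cn$. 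Plugging this lower bound into the two Chernoff exponents and union-bounding over $i=2,\dots,m$ yields the claimed $1-me^{-c\delta^2n/3}$ success probability, with a spurious factor of $2$ absorbed into the constant.

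The main obstacle is the dependence structure of the $D_{ij}$'s: the attacker samples uniformly \emph{without} replacement within each reported-type bucket, so each $D_{ij}$ (for fixed $j$) is hypergeometric rather than binomial, and $B_i$ is a sum of independent hypergeometric variables across $j$. I would handle this by invoking Hoeffding's classical observation that the Chernoff tail bounds for binomial transfer verbatim to hypergeometric sampling, combined with the independence of the selections in distinct reported-type buckets. A secondary, lower-order issue is that the identity $m_{ij}/(n\tau_j)=q_{ij}$ is only exact in expectation over the LDP randomness that generates $\boldsymbol\tau$; for a particular realization the empirical ratio deviates by $O(n^{-1/2})$, but this perturbation is dominated by the $\delta$-slack for sufficiently large $n$ and can be absorbed into the constant $c$.
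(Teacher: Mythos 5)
Your proposal is essentially the paper's own proof: the same decomposition $\Gamma_i=n\lambda_i-B_i$ with $\mathbb{E}[B_i]=\sum_j q_{ij}x_j^*$, the same use of constraint~\eqref{uip:2} to obtain a $\delta$-multiplicative slack, the same lower bound $\sum_j x_j^*\ge n(\lambda_{\max}-\lambda_1)$ (hence $\mathbb{E}[B_i]\ge cn$) extracted from the constraint at the index achieving $\lambda_{\max}$, and the same Chernoff bounds on the events $B_1\ge(1+\delta)\mathbb{E}[B_1]$ and $B_i\le(1-\delta)\mathbb{E}[B_i]$. If anything you are more careful than the paper, which combines the pairwise events through an unjustified product of marginals rather than a union bound and silently treats the without-replacement selection as independent Bernoulli draws, two points you explicitly flag and handle.
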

Note that according to the definition of $q_{ij}$'s, $q_{ij}>0$ if $p_{ji}>0$, which is satisfied by randomized response and Laplace mechanisms. 

\begin{proof}
Let the set of deleted voters be $D$. Let $\Gamma_i$ be the number of voters whose true type is $i$ after deletion, which is a random variable. 

For any $V_j\in D$, consider the event $EV_{ij}$ that the true type of voter $V_j$ is $i$. Define the binary random variable $L_{ij}$ such that $L_{ij}=1$ if $EV_{ij}$ occurs, and $L_{ij}=0$ otherwise. We know that $n\lambda_i-\Gamma_i$ can be expressed as: 
$$n\lambda_i-\Gamma_i=\sum_{j\in D} L_{ij}.$$

%Since, $L_{ij}$ are independent. We know that
Hence,
$$\mathbb{E}[n\lambda_i-\Gamma_i]=\sum_{j\in D} \Pr[L_{ij}=1]=\sum_{j=1}^m x_j^* q_{ij}.$$

We claim that $\sum_jx_j^*\ge (\lambda_{\max}-\lambda_1)n$. To see this, let $\lambda_k=\lambda_{\max}$. Given that $\hat{\Gamma_1}-\hat{\Gamma}_k\ge 0$, we know $\sum_{j}(q_{kj}-q_{1j})x_j^*\ge n(\lambda_k-\lambda_1)$. As $|q_{kj}-q_{1j}|\le 1$, it follows directly that $\sum_{j}(q_{kj}-q_{1j})x_j^*\ge \sum_{j}(q_{kj}-q_{1j})x_j^*\ge n(\lambda_k-\lambda_1)$. Hence, the claim is true.

Given the above claim, it is easy to see that $$\mathbb{E}[n\lambda_i-\Gamma_i] \geq q_{\min}\sum_{j=1}^n x_j^* \geq q_{\min}(\lambda_{\max}-\lambda_1)n=cn.$$

%We know that $$\mathbb{E}[n\lambda_i-\Gamma_i]\geq cn,$$ where $c= q_{min}(\epsilon)(\lambda_{max}-\lambda_1)$ is a constant.

Consider the probability that the designated candidate $C_1$ becomes one co-winner after voter deletion, which can be expressed as $\prod_{i=2}^m \Pr[\Gamma_1 \geq \Gamma_i]$.
Given that $\vex^*$ guarantees that $\hat{\Gamma}_1-\hat{\Gamma}_{i}\ge \delta\sum_{j=1}^m (q_{1j}+q_{ij})x_j^*$, if the event occurs that $\Gamma_1\geq \hat{\Gamma}_1 -\delta\sum_{j=1}^m q_{1j}x_j^*$, and meanwhile $\Gamma_i\leq \hat{\Gamma}_i +\delta\sum_{j=1}^m q_{ij}x_j^*$, then it holds directly that $\Gamma_1\ge \Gamma_i$. Hence,  
\begin{eqnarray*}
&&\prod_{i=2}^m \Pr[\Gamma_1 \geq \Gamma_i] \\
&\geq& \Pr[\Gamma_1\geq \hat{\Gamma}_1 -\delta\sum_{j=1}^m q_{1j}x_j^*] \prod_{i=2}^m \Pr[\Gamma_i\leq \hat{\Gamma}_i+\delta\sum_{j=1}^m q_{ij}x_j^*].
\end{eqnarray*}

By Chernoff Bounds, it is easy to see that $$\Pr[\Gamma_1\geq \hat{\Gamma}_1-\delta\sum_{j=1}^m q_{1j}x_j^*] \geq 1 - e ^{-\frac{c\delta^2}{3}n},$$ $$\Pr[\Gamma_i\leq \hat{\Gamma}_i+\delta\sum_{j=1}^m q_{ij}x_j^*] \geq 1 - e ^{-\frac{c\delta^2}{3}n}.$$

Hence, $$\prod_{i=2}^m \Pr[\Gamma_1 \geq \Gamma_i] \ge (1 - e ^{-\frac{c\delta^2}{3}n})^m\geq 1 - me^{-\frac{c\delta^2}{3}n}. \qedhere$$
\end{proof}

Similarly, we consider another integer linear program $\underline{\text{ILP}}(\boldsymbol{\tau},\delta)$ as follow \begin{subequations}
\begin{eqnarray}
&\hspace{-15mm}\min&\hspace{-5mm} \sum_{i=1}^m x_i \nonumber\\
&\hspace{-15mm}s.t.&  \hspace{-5mm} \hat{\Gamma}_i=n\lambda_i-\sum_{j=1}^m q_{ij}x_j \qquad\qquad\; \forall\;
1\le i \le m \label{lip:1}\\
&& \hspace{-5mm} \hat{\Gamma}_1\geq \hat{\Gamma}_i -  \delta\sum_{j=1}^m (q_{1j}+q_{ij})x_j  \quad \forall\; 2\le i\le m \label{lip:2}\\
&& \hspace{-5mm} x_i\leq n\tau_i  \qquad\qquad\qquad\qquad\quad\;\, \forall\;
1\le i \le m \label{lip:3}\\
&& \hspace{-5mm} x_i\in \mathbb{N}  \nonumber
\end{eqnarray}
\end{subequations}
where again $x_j$ denote the number of voters with reported type $j$ and will be deleted.

%To prove Theorem~\ref{fsrtau}, we need the following lemma.

\begin{lemma}\label{lemma:lower}
Let $\vex^*=(x_1^*,\dots,x_m^*)$ denote the optimal solution to $\underline{\text{ILP}}(\boldsymbol{\tau},\delta)$.
If the attacker deletes $y_i$ voters whose reported type is $i$ such that 
$\sum_jy_j<\sum_jx_j^*:=\text{OPT}_{\underline{\text{ILP}}(\boldsymbol{\tau},\delta)}$, then the winning probability of the designated candidate is less than $2e^{-\frac{c\delta^2}{3}n}$ where $c= q_{\min}(\lambda_{\max}-\lambda_1)$, $q_{\min}=\min_{i,j}q_{ij}$, $\lambda_{\max}=\max_j\lambda_j$. %{\color{blue} check this.}
\end{lemma}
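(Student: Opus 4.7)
The plan is to mirror the proof of Lemma~\ref{lemma:upper} with the roles of feasibility and infeasibility swapped: instead of using feasibility of $\vex^*$ to certify that $C_1$ beats every opponent, I would use the fact that $\vey$ is infeasible for $\underline{\text{ILP}}(\boldsymbol{\tau},\delta)$ to certify that some specific opponent $C_{i^*}$ beats $C_1$ with high probability. The cardinality constraint $y_j \le n\tau_j$ is automatic (the attacker can only delete voters that actually exist), so the hypothesis $\sum_j y_j < \sum_j x_j^*$ together with the optimality of $\vex^*$ forces $\vey$ to violate some instance of constraint~\eqref{lip:2}. Writing $a_i := \sum_j q_{ij} y_j$ (so that $\mathbb{E}[\Gamma_i] = n\lambda_i - a_i$), the violated constraint at an index $i^* \in \{2,\ldots,m\}$ reads
$$\mathbb{E}[\Gamma_{i^*}] - \mathbb{E}[\Gamma_1] \;>\; \delta(a_1 + a_{i^*}).$$

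Next I would split on the magnitude of $\sum_j y_j$; let $k$ satisfy $\lambda_k = \lambda_{\max}$. If $\sum_j y_j < n(\lambda_{\max}-\lambda_1)$, then because at most $\sum_j y_j$ voters of true type $k$ can ever be removed we have $\Gamma_k \ge n\lambda_k - \sum_j y_j$ deterministically, while $\Gamma_1 \le n\lambda_1$ always, so $\Gamma_k > \Gamma_1$ with probability one and the winning probability of $C_1$ is already zero. Otherwise $\sum_j y_j \ge n(\lambda_{\max}-\lambda_1)$, and the bound $q_{ij} \ge q_{\min}$ gives the key linear-in-$n$ estimate $a_1, a_{i^*} \ge q_{\min}(\lambda_{\max}-\lambda_1)\,n = cn$.

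In this second regime I would invoke multiplicative Chernoff exactly as in the proof of Lemma~\ref{lemma:upper} on the independent-indicator sums $n\lambda_1 - \Gamma_1 = \sum_{j\in D} L_{1,j}$ and $n\lambda_{i^*} - \Gamma_{i^*} = \sum_{j\in D} L_{i^*,j}$ to obtain
$$\Pr\bigl[\Gamma_1 > \mathbb{E}[\Gamma_1] + \delta a_1\bigr] \le e^{-\delta^2 a_1/2} \le e^{-c\delta^2 n/3},$$
$$\Pr\bigl[\Gamma_{i^*} < \mathbb{E}[\Gamma_{i^*}] - \delta a_{i^*}\bigr] \le e^{-\delta^2 a_{i^*}/3} \le e^{-c\delta^2 n/3}.$$
Outside both tail events the violated constraint above yields $\Gamma_1 \le \mathbb{E}[\Gamma_1] + \delta a_1 < \mathbb{E}[\Gamma_{i^*}] - \delta a_{i^*} \le \Gamma_{i^*}$, so $C_{i^*}$ strictly beats $C_1$; a union bound over the two tail events then caps the winning probability of $C_1$ by $2e^{-c\delta^2 n/3}$.

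The main obstacle I anticipate is precisely the matching of constants: a one-shot Chernoff proof fails whenever $\vey$ deletes very few voters, since the natural exponent $\delta^2 a_i/3$ is then far smaller than $c\delta^2 n/3$ and the bound becomes vacuous. The case split above is what rescues the argument: when $\sum_j y_j$ is too small for Chernoff to be useful, the mass-dominant candidate $C_k$ already defeats $C_1$ deterministically, and when $\sum_j y_j \ge n(\lambda_{\max}-\lambda_1)$ the Chernoff exponent is automatically at least $c\delta^2 n/3$, so the argument from Lemma~\ref{lemma:upper} carries over essentially verbatim.
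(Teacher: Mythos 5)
Your proposal is correct and follows the same skeleton as the paper's proof: extract a violated instance of constraint~\eqref{lip:2} at some index $i^*$ from the infeasibility of $\vey$, apply Chernoff to the indicator sums $n\lambda_1-\Gamma_1$ and $n\lambda_{i^*}-\Gamma_{i^*}$, and conclude that $\Gamma_{i^*}>\Gamma_1$ except on an event of probability at most $2e^{-c\delta^2 n/3}$. The one place you genuinely diverge is your case split on $\sum_j y_j$ versus $n(\lambda_{\max}-\lambda_1)$, and this is an improvement rather than a detour: the paper simply asserts that ``the same argument as Lemma~\ref{lemma:upper}'' yields the two Chernoff inequalities with exponent $c\delta^2 n/3$, but that argument derived the lower bound $\mathbb{E}[n\lambda_i-\Gamma_i]\ge cn$ from the feasibility of $\vex^*$ (specifically from $\hat{\Gamma}_1-\hat{\Gamma}_k\ge 0$), which the infeasible $\vey$ need not satisfy --- indeed $\vey=\ve 0$ makes the Chernoff exponent vanish. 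Your observation that in that regime the candidate with $\lambda_k=\lambda_{\max}$ deterministically retains more than $n\lambda_1\ge\Gamma_1$ voters closes this gap cleanly. A second small point in your favor: you combine the two tail events by a union bound, whereas the paper multiplies the two probabilities, which presumes an independence that does not hold (both events depend on the true types of the same deleted voters); since $(1-p)(1-q)\ge 1-p-q$, the final constant is the same either way, but your justification is the rigorous one.
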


\begin{proof}
%Suppose there exists one constraint of Eq~\eqref{lip:2} is {\color{blue} tight}. In another word, $\hat{\Gamma}_i=\hat{\Gamma}_1-\delta(\sum_{j=1}^m(q_{1j}+q_{ij})x_j)$.
Define $\hat{\Gamma}_i(\vey)=n\lambda_i-\sum_{j=1}^m q_{ij}y_j$.
Given that $\vey=(y_1,\cdots,y_m)$ is not a feasible solution to $\underline{\text{ILP}}(\boldsymbol{\tau},\delta)$ (as its objective value is smaller than the optimal objective value), it must hold that for some $2\le i^*\le m$, 
$$\hat{\Gamma}_1(\vey)<\hat{\Gamma}_{i^*}(\vey)-\delta\sum_{j=1}^m (q_{1j}+q_{i^*j})y_j.$$

Again Let $\Gamma_i(\vey)$ be the number of voters whose true type is $i$ after the attacker deletes $y_i$ voters with reported type $i$, which is a random variable. Using the same argument as Lemma~\ref{lemma:upper}, we know that
\begin{eqnarray*}
&&\Pr[\Gamma_1(\vey)\leq \hat{\Gamma}_1(\vey)+\delta\sum_{j=1}^m q_{1j}y_j] \geq 1 - e ^{-\frac{c\delta^2}{3}n},\\
&&\Pr[\Gamma_{i^*}(\vey)\geq \hat{\Gamma}_{i^*}(\vey)-\delta\sum_{j=1}^m q_{i^*j}y_j] \geq 1 - e ^{-\frac{c\delta^2}{3}n},
\end{eqnarray*}
where $c= q_{\min}(\lambda_{\max}-\lambda_1)$.

Consequently, 
\begin{eqnarray*}
&& \Pr[\Gamma_1(\vey) < \Gamma_{i^*}(\vey)] \\
&\geq& \Pr[\Gamma_1(\vey)\leq \hat{\Gamma}_1(\vey) +\delta\sum_{j=1}^m q_{1j}y_j] \\&\times& \Pr[\Gamma_{i^*}(\vey)\geq \hat{\Gamma}_{i^*}(\vey)-\delta\sum_{j=1}^m q_{i^*j}y_j]\\
&\ge& (1 - e ^{-\frac{c\delta^2}{3}n})^2\ge 1 - 2e ^{-\frac{c\delta^2}{3}n}
\end{eqnarray*}
That is, the winning probability of the attacker is at most $2e ^{-\frac{c\delta^2}{3}n}$, and Lemma~\ref{lemma:lower} is proved.
\end{proof}

\subsection{Robustness---a Sufficient Condition}
In previous section, we have established two integer linear programs to characterize the lower and upper bound of $f(\mathcal{S},\mathcal{R},\xi:\boldsymbol{\tau})$ in general. In this section, we will prove that under some mild assumptions, the lower and upper bounds will approach the same value when the number of voters is sufficiently large.

We introduce the following ${\text{ILP}}(\boldsymbol{\tau})$ as an \lq\lq intermediate\rq\rq\, integer linear program between
$\overline{\text{ILP}}(\boldsymbol{\tau},\delta)$ and $\underline{\text{ILP}}(\boldsymbol{\tau},\delta)$: \begin{subequations}
\begin{eqnarray}
&\hspace{-15mm}\min&\hspace{-5mm} \sum_{i=1}^m x_i \nonumber\\
&\hspace{-15mm}s.t.&  \hspace{-5mm} \hat{\Gamma}_i=n\lambda_i-\sum_{j=1}^m q_{ij}x_j \qquad\qquad \forall\;
1\le i \le m \label{mip:1}\\
&& \hspace{-5mm} \hat{\Gamma}_1\geq \hat{\Gamma}_i  \qquad\qquad\qquad\qquad\quad\;\, \forall\; 2\le i\le m \label{mip:2}\\
&& \hspace{-5mm} x_i\leq n\tau_i  \qquad\qquad\qquad\qquad\quad\, \forall\;
1\le i \le m \label{mip:3}\\
&& \hspace{-5mm} x_i\in \mathbb{N}  \nonumber
\end{eqnarray}
\end{subequations}

\noindent\textbf{Remark on the integral constraint.} We denote ${\text{LP}}(\boldsymbol{\tau})$ as the linear program which ignores the integral constraint of ${\text{ILP}}(\boldsymbol{\tau})$. There are $O(m)$ constraints in ${\text{ILP}}(\boldsymbol{\tau})$. Hence, the additive gap between ${\text{ILP}}(\boldsymbol{\tau})$ and its linear relaxation ${\text{LP}}(\boldsymbol{\tau})$ is bounded by $O(m)$. Meanwhile, we know that there exists constant $c$ such that $\text{OPT}_{{\text{ILP}}(\boldsymbol{\tau})}\geq cn$. It turns out when $n$ is sufficiently large, we can ignore the additive gap between ${\text{ILP}}(\boldsymbol{\tau})$ and its linear relaxation. In other words, it holds that $\lim \limits_{n\to \infty} \text{OPT}_{{\text{LP}}(\boldsymbol{\tau})} / \text{OPT}_{{\text{ILP}}(\boldsymbol{\tau})}=1$. For integer linear program $\overline{\text{ILP}}(\boldsymbol{\tau},\delta)$ and $\overline{\text{ILP}}(\boldsymbol{\tau},\delta)$ introduced in previous section, we can conclude similar results (e.g., $\lim \limits_{n\to \infty} \overline{\text{LP}}(\boldsymbol{\tau},\delta) / \overline{\text{ILP}}(\boldsymbol{\tau},\delta)=1$, $\lim \limits_{n\to \infty} \underline{\text{LP}}(\boldsymbol{\tau},\delta) / \underline{\text{ILP}}(\boldsymbol{\tau},\delta)=1$ where $\overline{\text{LP}}(\boldsymbol{\tau},\delta)$ and $\underline{\text{LP}}(\boldsymbol{\tau},\delta)$ represent the linear relaxation respectively).

According to Lemma~\ref{lemma:upper} and Lemma~\ref{lemma:lower}, taking $\delta_0=n^{-1/3}$ and $\xi_0=me^{-\frac{c\delta_0^2}{3}n}\ge 2e^{-\frac{c}{3}n^{1/3}}$, we know 
\begin{eqnarray}
f(\mathcal{S},\mathcal{R},1-\xi_0:\boldsymbol{\tau})\le \text{OPT}_{\overline{\text{ILP}}(\boldsymbol{\tau},\delta_0)}, \label{eq:1}
\end{eqnarray} and 
\begin{eqnarray}
f(\mathcal{S},\mathcal{R},\xi_0:\boldsymbol{\tau})\ge \text{OPT}_{\underline{\text{ILP}}(\boldsymbol{\tau},\delta_0)}. \label{eq:2}
\end{eqnarray}

\begin{lemma}\label{lemma:stable-condition}
For any $1\le j\le m$ %and $2\le i\le m$, if $q_{ij}-q_{1j}>0$, 
if $q_{ij}-q_{1j}>0$ for all $2\le i\le m$, or $q_{ij}-q_{1j}<0$ for all $2\le i\le m$, 
then there exists some $c'>0$ which is independent of $n$, such that for sufficiently small $\delta$ the followings hold:
\begin{eqnarray}
\text{OPT}_{{\text{LP}}(\boldsymbol{\tau})}\le \text{OPT}_{\overline{\text{LP}}(\boldsymbol{\tau},\delta)}\le (1+c'\delta)\text{OPT}_{{\text{LP}}(\boldsymbol{\tau})}, \label{eq:ILP-m1}
\end{eqnarray}
and 
\begin{eqnarray}
 \text{OPT}_{\underline{\text{LP}}(\boldsymbol{\tau},\delta)}\le \text{OPT}_{{\text{LP}}(\boldsymbol{\tau})}\le  (1+c'\delta)\text{OPT}_{\underline{\text{LP}}(\boldsymbol{\tau},\delta)}. \label{eq:ILP-m2}
\end{eqnarray}
\end{lemma}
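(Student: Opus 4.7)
The two left inequalities in \eqref{eq:ILP-m1} and \eqref{eq:ILP-m2} are immediate from nesting of feasible regions. Every $\vex\ge 0$ satisfying the constraint $\hat{\Gamma}_1\ge \hat{\Gamma}_i+\delta\sum_j(q_{1j}+q_{ij})x_j$ of $\overline{\text{LP}}(\boldsymbol{\tau},\delta)$ trivially satisfies $\hat{\Gamma}_1\ge\hat{\Gamma}_i$, which in turn implies $\hat{\Gamma}_1\ge \hat{\Gamma}_i-\delta\sum_j(q_{1j}+q_{ij})x_j$; since the objective $\sum_i x_i$ and the box constraints are identical across the three programs, minimising over progressively larger feasible sets yields $\text{OPT}_{\underline{\text{LP}}(\boldsymbol{\tau},\delta)}\le\text{OPT}_{{\text{LP}}(\boldsymbol{\tau})}\le\text{OPT}_{\overline{\text{LP}}(\boldsymbol{\tau},\delta)}$.

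For the two right inequalities I would exploit the sign hypothesis to partition the column indices as $J^+=\{j:q_{ij}-q_{1j}>0\text{ for all }2\le i\le m\}$ and $J^-=\{j:q_{ij}-q_{1j}<0\text{ for all }2\le i\le m\}$, so that $J^+\cup J^-=\{1,\ldots,m\}$. A key structural observation is that at any optimum $\vex^*$ of ${\text{LP}}(\boldsymbol{\tau})$ one has $x_j^*=0$ for every $j\in J^-$: since $\hat{\Gamma}_1-\hat{\Gamma}_i=n(\lambda_1-\lambda_i)+\sum_{j'}(q_{ij'}-q_{1j'})x_{j'}$ and the coefficient $q_{ij}-q_{1j}$ is negative for every $i\ge 2$ when $j\in J^-$, reducing such an $x_j$ simultaneously strengthens every constraint $\hat{\Gamma}_1\ge\hat{\Gamma}_i$ and strictly lowers the objective, contradicting optimality. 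The analogous statement for $\underline{\text{LP}}(\boldsymbol{\tau},\delta)$ holds once $\delta$ is small enough that $|q_{ij}-q_{1j}|>\delta(q_{1j}+q_{ij})$ for every $j\in J^-$, $i\ge 2$, so that the net coefficient of $x_j$, $j\in J^-$, in its constraint remains negative.

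With this reduction in hand, the bound $\text{OPT}_{\overline{\text{LP}}(\boldsymbol{\tau},\delta)}\le(1+c'\delta)\,\text{OPT}_{{\text{LP}}(\boldsymbol{\tau})}$ follows from a one-coordinate perturbation. Pick any $j^+\in J^+$ with $x_{j^+}^*<n\tau_{j^+}$, set $\alpha:=\min_{i\ge 2}(q_{ij^+}-q_{1j^+})>0$, and define $\vey$ by $y_{j^+}=x_{j^+}^*+t$ and $y_j=x_j^*$ otherwise. Since $\sum_u q_{uj}=1$ implies $q_{1j}+q_{ij}\le 1$, we have $\delta\sum_j(q_{1j}+q_{ij})y_j\le\delta\bigl(\text{OPT}_{{\text{LP}}}+t\bigr)$, while the perturbation boosts $\hat{\Gamma}_1-\hat{\Gamma}_i$ by exactly $(q_{ij^+}-q_{1j^+})t\ge\alpha t$. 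Choosing $t=\delta\,\text{OPT}_{{\text{LP}}}/(\alpha-\delta)$ makes $\vey$ feasible for $\overline{\text{LP}}(\boldsymbol{\tau},\delta)$ and its objective $\text{OPT}_{{\text{LP}}}+t$ is at most $(1+c'\delta)\text{OPT}_{{\text{LP}}}$ for any fixed $c'>1/\alpha$ and all sufficiently small $\delta$. The companion bound $\text{OPT}_{{\text{LP}}}\le(1+c'\delta)\text{OPT}_{\underline{\text{LP}}}$ is entirely symmetric: I start from an optimum of $\underline{\text{LP}}(\boldsymbol{\tau},\delta)$ (which by the above also has zero mass on $J^-$) and bump its $j^+$-coordinate upward by the same order of magnitude to absorb the additive slack $\delta\sum_j(q_{1j}+q_{ij})y_j$.

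The main obstacle is verifying the box constraint $x_{j^+}\le n\tau_{j^+}$ at the perturbed coordinate and handling degeneracies. Since $t=O(\delta)\cdot\text{OPT}_{{\text{LP}}}$ tends to zero with $\delta$ while $n\tau_{j^+}$ is fixed, an index $j^+\in J^+$ with strictly positive slack exists whenever $\overline{\text{LP}}(\boldsymbol{\tau},\delta)$ itself is feasible; the pathological case in which every $j^+\in J^+$ is saturated in $\vex^*$ corresponds to $\sum_{j\in J^+}x_j^*=\sum_{j\in J^+}n\tau_j$, and the strictly stronger constraints of $\overline{\text{LP}}(\boldsymbol{\tau},\delta)$ would then be unsatisfiable, so the claimed inequality is vacuous. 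The remaining trivial case $J^+=\emptyset$ forces $\lambda_1\ge\lambda_i$ for every $i$ and makes all three optima equal to zero, so the lemma holds trivially.
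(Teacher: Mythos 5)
Your left-hand inequalities via nesting of feasible regions, the elimination of the $J^-$ variables, and the general idea of perturbing an optimal solution of ${\text{LP}}(\boldsymbol{\tau})$ so as to absorb the additive $\delta$-slack all match the paper's strategy. The gap is in how you make room for the perturbation inside the box constraints. You bump a single coordinate $j^+$ by $t=\delta\,\text{OPT}_{{\text{LP}}(\boldsymbol{\tau})}/(\alpha-\delta)$, which is of order $\delta n$ (since $\text{OPT}_{{\text{LP}}(\boldsymbol{\tau})}=\Theta(n)$), and you justify $x^*_{j^+}+t\le n\tau_{j^+}$ by a dichotomy: either some $j^+\in J^+$ has \emph{strictly positive} slack, or all of $J^+$ is saturated and the statement is vacuous. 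This dichotomy misses the real difficulty: the slack $n\tau_{j^+}-x^*_{j^+}$ can be positive yet far smaller than $t$ (say $O(1)$ while $t=\Theta(\delta n)$; in the intended application $\delta=n^{-1/3}$, so $t=\Theta(n^{2/3})$). Appealing to ``sufficiently small $\delta$'' does not rescue this, because the threshold would then depend on the instance and on $n$, which defeats the asymptotic use of the lemma. Moreover, even in your ``pathological'' case $\overline{\text{LP}}(\boldsymbol{\tau},\delta)$ need not be infeasible, and if it were infeasible the upper bound you are proving would be false rather than vacuous.

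The paper closes exactly this hole with a case analysis that your argument lacks. If $\text{OPT}_{{\text{LP}}(\boldsymbol{\tau})}>(1-c_0c_1\delta)n$, the trivial bound $\text{OPT}_{\overline{\text{LP}}(\boldsymbol{\tau},\delta)}\le n$ already yields the $1+O(\delta)$ factor. Otherwise the total unused box capacity is $\Omega(\delta n)$, and the paper redistributes mass among the coordinates (its Claim~1) to obtain a near-optimal solution $\vex'$ with \emph{proportional} slack $x'_i\le n\tau_i(1-c_1\delta)$ in every coordinate, losing only a $1+O(\delta)$ factor in the objective; it then scales the whole vector by $1/(1-c_1\delta)$ (Claim~2), which simultaneously restores feasibility of the $\delta$-tightened constraints and respects all boxes. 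The redistribution step is also where the sign hypothesis does real work, through the constant $c_0$ controlling how moving mass between columns affects each constraint. To repair your proof you would need to replace the one-coordinate bump with such a redistribution-plus-uniform-scaling argument, or with an equivalent case split on how close $\text{OPT}_{{\text{LP}}(\boldsymbol{\tau})}$ is to $n$.
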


By Lemma~\ref{lemma:stable-condition}, we know that $$\frac{\text{OPT}_{\overline{\text{LP}}(\boldsymbol{\tau},\delta_0)}}{\text{OPT}_{\underline{\text{LP}}(\boldsymbol{\tau},\delta_0)}}\le (1+c'\delta_0)^2.$$

Combine all, we know that for any $\xi\in(0,1)$, it holds that $$\lim \limits_{n \to \infty} f(\mathcal{S},\mathcal{R},\xi:\boldsymbol{\tau}) = \text{OPT}_{{\text{LP}}(\boldsymbol{\tau})} $$

The following part of this section is devoted to the proof of Lemma~\ref{lemma:stable-condition}.

\begin{proof}[Proof of Lemma~\ref{lemma:stable-condition}]
We prove Eq~\eqref{eq:ILP-m1} in the following. The proof of Eq~\eqref{eq:ILP-m2} follows from a similar argument.

Notice that for all $2\le i\le m$, we have
\begin{eqnarray}
\hat{\Gamma}_1-\hat{\Gamma}_i=n(\lambda_1-\lambda_i)+\sum_{j=1}^m(q_{ij}-q_{1j})x_{j}. \label{eq:gamma}
\end{eqnarray}
Let $\bar{\vex}^*$, ${\vex}^*$ and $\underline{\vex}^*$ be the optimal solutions to $\overline{\text{LP}}(\boldsymbol{\tau},\delta)$, ${\text{LP}}(\boldsymbol{\tau})$ and $\underline{\text{LP}}(\boldsymbol{\tau},\delta)$ respectively.

We first observe that if for some $j=j_0$, $q_{ij_0}-q_{1j_0}<0$ holds for all $2\le i\le m$, then $x_{j_0}^*=0$. This is because the coefficient of $x_{j_0}^*$ is negative for all $i$, if $x_{j_0}^*>0$, reducing it to $0$ the constraint of $\hat{\Gamma}_1-\hat{\Gamma}_i\ge 0$ still holds, while the objective value $\sum_ix^*_i$ decreases, contradicting to the optimality of $\vex^*$. Similarly, when $\delta$ is sufficiently small, we also have $(1-\delta)q_{ij_0}-(1+\delta)q_{1j_0}<0$ and $(1+\delta)q_{ij_0}-(1-\delta)q_{1j_0}<0$. Using the same argument we can conclude that $\bar{x}_{j_0}^*=0$ and $\underline{x}_{j_0}^*=0$. Hence, we can simply remove the variable $x_{j_0}$ for all such $j_0$'s. In the following we assume $q_{ij}-q_{1j}>0$ holds for all $2\le i\le m$ and $1\le j\le m$. 

We first compare ${\text{LP}}(\boldsymbol{\tau})$ with $\overline{\text{LP}}(\boldsymbol{\tau},\delta)$. 
%Let $\bar{\vex}^*$ and ${\vex}^*$ be the optimal solution to $\overline{\text{ILP}}(\boldsymbol{\tau},\delta)$ and ${\text{ILP}}(\boldsymbol{\tau})$, respectively.
By Eq~\eqref{uip:2} we know
$$\sum_{j=1}^m\left(q_{ij}(1-\delta)-q_{1j}(1+\delta) \right)\bar{x}^*_j\ge n(\lambda_i-\lambda_1).$$
Consequently, it follows that 
$$(1-\delta)\sum_{j=1}^m\left(q_{ij}-q_{1j} \right)\bar{x}^*_j\ge n(\lambda_i-\lambda_1),$$
implying that $(1-\delta)\bar{\vex}^*$ is a feasible solution to ${\text{LP}}(\boldsymbol{\tau})$. Thus
\begin{eqnarray*}
\text{OPT}_{{\text{LP}}(\boldsymbol{\tau})}\le (1-\delta)\text{OPT}_{\overline{\text{LP}}(\boldsymbol{\tau},\delta)}. \label{eq:medium-1}
\end{eqnarray*}

Next, we show that $\text{OPT}_{\overline{\text{LP}}(\boldsymbol{\tau},\delta)}$ is not too large compared with $\text{OPT}_{{\text{LP}}(\boldsymbol{\tau})}$. Towards this, define
\begin{eqnarray}
c_0=\frac{\max_{1\le j\le m}\max_{2\le i\le m}\{q_{ij}-q_{1j}\}}{\min_{1\le j\le m}\min_{2\le i\le m}\{q_{ij}-q_{1j}\}}, \label{eq:c_0}
\end{eqnarray}
and 
\begin{eqnarray}
c_1=\max_{1\le j\le m}\max_{2\le i\le m}\frac{q_{ij}+q_{1j}}{q_{ij}-q_{1j}}. \label{eq:c_1}
\end{eqnarray}
Notice that $\text{OPT}_{\overline{\text{LP}}(\boldsymbol{\tau},\delta)}\le \sum_in\tau_i=n$. 

If $\text{OPT}_{{\text{LP}}(\boldsymbol{\tau})}>(1-c_0\varepsilon)n$, then it follows directly that 
\begin{eqnarray*}
\text{OPT}_{\overline{\text{LP}}(\boldsymbol{\tau},\delta)}\le n &\le& \frac{1}{1-c_0c_1\delta}\text{OPT}_{{\text{LP}}(\boldsymbol{\tau})}\\&\le& \frac{1}{1-c_1\delta}(1+\frac{c_0c_1\delta}{1-c_1\delta})\text{OPT}_{{\text{LP}}(\boldsymbol{\tau})}.
\end{eqnarray*}

Otherwise, $\text{OPT}_{{\text{LP}}(\boldsymbol{\tau})}\le(1-c_0c_1\delta)n$. We make the following claims:
\begin{claim}\label{claim:1}
If $\text{OPT}_{{\text{LP}}(\boldsymbol{\tau})}\le(1-c_0c_1\delta)n$, then there exists a near optimal solution $\vex'$ to ${\text{LP}}(\boldsymbol{\tau})$ such that $x'_i\le n\tau_i(1-c_1\delta)$, and $\sum_{i}x_i'\le (1+\frac{c_0c_1\delta}{1-c_1\delta}) \text{OPT}_{{\text{LP}}(\boldsymbol{\tau})}$.
\end{claim}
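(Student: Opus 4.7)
The plan is to construct $\vex'$ explicitly by perturbing an optimal solution $\vex^*$ of $\text{LP}(\boldsymbol{\tau})$: cap every coordinate of $\vex^*$ that exceeds the tighter bound $n\tau_j(1-c_1\delta)$, and then redistribute the lost mass onto the remaining coordinates in a way that preserves feasibility of the constraints $\hat\Gamma_1\ge\hat\Gamma_i$. The lemma's hypothesis that $q_{ij}-q_{1j}>0$ for every surviving $j$ and every $2\le i\le m$ is exactly what allows redistribution on any coordinate to help every row simultaneously.

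Concretely, I would partition the indices into $S=\{j: x_j^*> n\tau_j(1-c_1\delta)\}$ and $T=\{j: x_j^*\le n\tau_j(1-c_1\delta)\}$, and define the per-coordinate excess $b_j=x_j^*-n\tau_j(1-c_1\delta)$ for $j\in S$, with total loss $B=\sum_{j\in S} b_j$. On $S$ set $x_j'=n\tau_j(1-c_1\delta)$, which immediately enforces the new cap. On $T$ set $x_j'=x_j^*+\Delta x_j$ with $\Delta x_j\ge 0$, $\Delta x_j\le n\tau_j(1-c_1\delta)-x_j^*$, and total budget $\sum_{j\in T}\Delta x_j=c_0 B$ (for instance, distributed proportionally to the per-coordinate slack).

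Three checks then yield the conclusion. \emph{Capacity:} a direct computation shows the available room on $T$ is $R=n(1-c_1\delta)-\text{OPT}_{\text{LP}(\boldsymbol{\tau})}+B$, which by the case hypothesis satisfies $R\ge(c_0-1)c_1\delta n+B$; combined with the easy bound $B\le c_1\delta n$ (from $b_j\le n\tau_j c_1\delta$), this gives $R\ge c_0 B$, so the budget fits. \emph{Feasibility of $\hat\Gamma_1\ge\hat\Gamma_i$:} the change in $\sum_j(q_{ij}-q_{1j})x_j$ when going from $\vex^*$ to $\vex'$ is at least $-\bigl(\max_{i,j}(q_{ij}-q_{1j})\bigr) B+\bigl(\min_{i,j}(q_{ij}-q_{1j})\bigr) c_0 B =0$ for every $2\le i\le m$, since $c_0$ is by definition the ratio of these two extremes. \emph{Objective:} the new total is $\sum_j x_j'=\text{OPT}_{\text{LP}(\boldsymbol{\tau})}+(c_0-1)B$, and bounding $B\le \frac{c_1\delta}{1-c_1\delta}\,\text{OPT}_{\text{LP}(\boldsymbol{\tau})}$ produces the advertised $\bigl(1+\tfrac{c_0 c_1\delta}{1-c_1\delta}\bigr)$ factor.

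The main obstacle is the third check: I must pass from the easy additive bound $B\le c_1\delta n$ to a multiplicative bound in terms of $\text{OPT}_{\text{LP}(\boldsymbol{\tau})}$, otherwise the perturbation only delivers an additive error. The key observation is that indices in $S$ are, by their definition, heavily loaded, since $x_j^*>n\tau_j(1-c_1\delta)$ forces $n\tau_j<x_j^*/(1-c_1\delta)$; summing over $S$ gives $\sum_{j\in S}n\tau_j\le \text{OPT}_{\text{LP}(\boldsymbol{\tau})}/(1-c_1\delta)$, which combined with $b_j\le n\tau_j c_1\delta$ yields the needed multiplicative bound on $B$. Everything else is clean bookkeeping enabled by the positivity of $q_{ij}-q_{1j}$.
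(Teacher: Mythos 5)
Your construction is correct and is essentially the paper's own proof: the same partition into over-cap and under-cap coordinates, the same redistribution of $c_0$ times the removed mass, the same capacity check via the case hypothesis $\text{OPT}_{\text{LP}(\boldsymbol{\tau})}\le(1-c_0c_1\delta)n$ together with $B\le c_1\delta n$, the same feasibility argument from $c_0$ being the max-to-min ratio of the coefficients $q_{ij}-q_{1j}$, and the same key observation that $\sum_{j\in S}n\tau_j\le\text{OPT}_{\text{LP}(\boldsymbol{\tau})}/(1-c_1\delta)$ converts the additive loss into the multiplicative factor $1+\frac{c_0c_1\delta}{1-c_1\delta}$. No gaps.
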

\begin{proof}[Proof of Claim~\ref{claim:1}] 
We construct $\vex'$ by modifying $\vex^*$. Let $I=\{1\le i\le m: x_i^*\ge n\tau_i(1-c_1\delta)\}$. For $i\in I$, we let $\Delta_i=x_i^*-n\tau_i(1-c_1\delta)\ge 0$. For $i\not\in I$, we let  $\Delta_i=n\tau_i(1-c_1\delta)-x_i^*\ge 0$. It is easy to see that
\begin{eqnarray*}
\sum_{i\not\in I} \Delta_i-\sum_{i\in I} \Delta_i&=&\sum_i n\tau_i(1-c_1\delta)-\sum_ix_i^*\\
&\ge& n(1-c_1\delta)-n(1-c_0c_1\delta)\\
&\ge& (c_0c_1-c_1)\delta n
\end{eqnarray*}
Meanwhile, $\sum_{i\in I} \Delta_i\le \sum_{i\in I} c_1\delta n\tau_i\le c_1\delta n$, hence,
$$\sum_{i\not\in I} \Delta_i-c_0\sum_{i\in I} \Delta_i\ge 0.$$
For $i\not\in I$, pick arbitrary $0\le\Delta_i'\le \Delta_i$ such that 
$$\sum_{i\not\in I} \Delta_i'-c_0\sum_{i\in I} \Delta_i= 0.$$

We define $\vex'$ as follows.
For $i\in I$, we let $x'_i=n\tau_i(1-c_1\delta)=x^*_i-\Delta_i$; For $i\not\in I$, we let $x'_i=x_i+\Delta_i$. 
It is easy to see that $x_i'\le n\tau_i(1-c_1\delta)$ for all $i$. 

We show that $\vex'$ constructed above also satisfies Eq~\eqref{mip:1} and Eq~\eqref{mip:2}, and is thus feasible to ${\text{LP}}(\boldsymbol{\tau})$. Recall Eq~\eqref{eq:gamma}, we know $\sum_{j=1}^m (q_{ij}-q_{1j})x_j^*\ge n(\lambda_i-\lambda_1)$. Meanwhile, we have that
\begin{eqnarray*}
&&\sum_{j=1}^m (q_{ij}-q_{1j})x_j'\\
&=&\sum_{j\in I} (q_{ij}-q_{1j})(x_j^*-\Delta_i)+\sum_{j\in I} (q_{ij}-q_{1j})(x_j^*+\Delta_i)\\
&=&\sum_{j=1}^m (q_{ij}-q_{1j})x_j^*+\sum_{j\not\in I}(q_{ij}-q_{1j})\Delta_j-\sum_{j\in I}(q_{ij}-q_{1j})\Delta_j\\
&\ge& n(\lambda_i-\lambda_1)+\min_{i,j}\{q_{ij}-q_{1j}\}\sum_{j\not\in I}\Delta_j\\&-&\max_{i,j}\{q_{ij}-q_{1j}\}\sum_{j\in I}\Delta_j\\
&\ge& n(\lambda_i-\lambda_1)+\min_{i,j}\{q_{ij}-q_{1j}\}(\sum_{j\not\in I}\Delta_j-c_0\sum_{j\in I}\Delta_j)\\
&\ge& n(\lambda_i-\lambda_1)
\end{eqnarray*}
Hence, $\vex'$ is feasible to ${\text{LP}}(\boldsymbol{\tau})$.

Finally, we compare the objective value of $\vex'$ and $\vex^*$. We observe that
$$\text{OPT}_{{\text{LP}}(\boldsymbol{\tau})}=\sum_ix_i^*\ge \sum_{i\in I}x_i^*\ge (1-c_1\delta)\sum_{i\in I}n\tau_i.$$
Meanwhile,
\begin{eqnarray*}
\sum_{i}x_i'-\sum_ix_i^*&\le& \sum_{i\not\in I} \Delta_i'-\sum_{i\in I}\Delta_i\\
&\le& c_0 \sum_{i\in I}\Delta_i\\
&\le & c_0c_1\delta\sum_{i\in I} n\tau_i
\end{eqnarray*}
Consequently, $\sum_{i}x_i'\le (1+\frac{c_0c_1\delta}{1-c_1\delta}) \text{OPT}_{{\text{LP}}(\boldsymbol{\tau})}$.
\end{proof}

Now we construct a feasible solution of $\overline{\text{LP}}(\boldsymbol{\tau},\delta)$ based on $\vex'$. 

\begin{claim}\label{claim:2}
If $\text{OPT}_{{\text{LP}}(\boldsymbol{\tau})}\le(1-c_0c_1\delta)n$, then let $\vex'$ be the solution to ${\text{LP}}(\boldsymbol{\tau})$ satisfying Claim~\ref{claim:1}. Then $\frac{1}{1-c_1\delta}\vex'$ is a feasible solution to $\overline{\text{LP}}(\boldsymbol{\tau},\delta)$.
\end{claim}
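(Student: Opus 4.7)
\textbf{Proof proposal for Claim~\ref{claim:2}.} The plan is to set $\vey = \frac{1}{1-c_1\delta}\vex'$ and directly verify the three families of constraints that define $\overline{\text{LP}}(\boldsymbol{\tau},\delta)$: the definitional equations \eqref{uip:1}, the upper bounds \eqref{uip:3}, and the separation inequalities \eqref{uip:2}. The first family is automatic because the $\hat{\Gamma}_i$ are merely abbreviations. The second, $y_i\le n\tau_i$, follows immediately from Claim~\ref{claim:1}, which guarantees $x_i'\le n\tau_i(1-c_1\delta)$; scaling by $1/(1-c_1\delta)$ hits the bound exactly.

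The substantive step is the separation constraint \eqref{uip:2}. Rewriting it using \eqref{eq:gamma} for $\vey$, what must be shown is
\begin{equation*}
\sum_{j=1}^m\bigl((1-\delta)q_{ij}-(1+\delta)q_{1j}\bigr)y_j \;\ge\; n(\lambda_i-\lambda_1) \qquad \forall\, 2\le i\le m.
\end{equation*}
I would split the left side as $\sum_j(q_{ij}-q_{1j})y_j - \delta\sum_j(q_{ij}+q_{1j})y_j$. Feasibility of $\vex'$ in $\text{LP}(\boldsymbol{\tau})$, combined with the scaling, gives $\sum_j(q_{ij}-q_{1j})y_j \ge n(\lambda_i-\lambda_1)/(1-c_1\delta)$. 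Thus it suffices to show $\delta\sum_j(q_{ij}+q_{1j})y_j \le c_1\delta\sum_j(q_{ij}-q_{1j})y_j$, which follows termwise from the defining inequality $q_{ij}+q_{1j}\le c_1(q_{ij}-q_{1j})$ in \eqref{eq:c_1} (using $y_j\ge 0$ and that $q_{ij}-q_{1j}>0$ by the reduction already carried out earlier in the proof of Lemma~\ref{lemma:stable-condition}). Plugging this back in and combining the two bounds yields exactly $n(\lambda_i-\lambda_1)$ on the right, as required.

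The main (minor) obstacle is keeping track of signs and the domain reduction: once the coordinates $j_0$ with $q_{ij_0}-q_{1j_0}<0$ have been removed, every remaining denominator in the definition of $c_1$ is positive, and the termwise estimate above is valid. No Chernoff-type or probabilistic argument is needed here; Claim~\ref{claim:2} is a purely algebraic scaling argument, tailored precisely so that the $\delta$-slack introduced in $\overline{\text{LP}}(\boldsymbol{\tau},\delta)$ is absorbed by the factor $1/(1-c_1\delta)$, with $c_1$ acting as the uniform ratio bound between $q_{ij}+q_{1j}$ and $q_{ij}-q_{1j}$.
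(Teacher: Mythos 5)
Your proposal is correct and is essentially the paper's own argument: the paper verifies constraint \eqref{uip:2} by rewriting each coefficient $q_{ij}-q_{1j}$ as $\bigl((1-\delta)q_{ij}-(1+\delta)q_{1j}\bigr)\cdot\frac{1}{1-\delta\frac{q_{ij}+q_{1j}}{q_{ij}-q_{1j}}}$ and bounding the ratio uniformly by $\frac{1}{1-c_1\delta}$, which is algebraically the same as your additive split $\bigl((1-\delta)q_{ij}-(1+\delta)q_{1j}\bigr)\ge(1-c_1\delta)(q_{ij}-q_{1j})$, and it likewise obtains $y_j\le n\tau_j$ from the slack $x_j'\le n\tau_j(1-c_1\delta)$ of Claim~\ref{claim:1}. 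No substantive difference.
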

\begin{proof}[Proof of Claim~\ref{claim:2}]
As $\vex'$ is feasible to ${\text{LP}}(\boldsymbol{\tau})$, we know 
$$\sum_{j=1}^m(q_{ij}-q_{1j})x_{j}'\ge n(\lambda_i-\lambda_1), \forall 2\le i\le m.$$
Consequently,
\begin{eqnarray*}
&&n(\lambda_i-\lambda_1)\\
&\le &\sum_{j=1}^m(q_{ij}-q_{1j})x_{j}'\\
&=&\sum_{j=1}^m\left((1-\delta)q_{ij}-(1+\delta)q_{1j}\right)\cdot\frac{q_{ij}-q_{1j}}{(1-\delta)q_{ij}-(1+\delta)q_{1j}}x_{j}'\\
&=&\sum_{j=1}^m\left((1-\delta)q_{ij}-(1+\delta)q_{1j}\right)\cdot\frac{1}{1-\delta\frac{q_{ij}+q_{1j}}{q_{ij}-q_{1j}}}x_{j}'\\
&\le& \sum_{j=1}^m\left((1-\delta)q_{ij}-(1+\delta)q_{1j}\right)\cdot\frac{1}{1-c_1\delta}x_j'
\end{eqnarray*}
Furthermore, $\frac{1}{1-c_1\delta}x_j'\le n\tau_j$ by Claim~\ref{claim:1}. Thus, $\frac{1}{1-c_1\delta}\vex'$ is a feasible solution to $\overline{\text{LP}}(\boldsymbol{\tau},\delta)$. 
\end{proof}
Given that $\bar{\vex}^*$ is optimal to $\overline{\text{LP}}(\boldsymbol{\tau},\delta)$, by Claim~\ref{claim:2} we get that
$$\text{OPT}_{\overline{\text{LP}}(\boldsymbol{\tau},\delta)}=\sum_i\bar{x}_i^*\le \frac{1}{1-c_1\delta}\sum_{i}x_i'.$$
Meanwhile by Claim~\ref{claim:1} we have
$$\sum_{i}x_i'\le (1+\frac{c_0c_1\delta}{1-c_1\delta}) \text{OPT}_{{\text{LP}}(\boldsymbol{\tau})},$$
hence
\begin{eqnarray}
\text{OPT}_{\overline{\text{LP}}(\boldsymbol{\tau},\delta)}\le \frac{1}{1-c_1\delta}(1+\frac{c_0c_1\delta}{1-c_1\delta}) \text{OPT}_{{\text{LP}}(\boldsymbol{\tau})}. \label{eq:m2}
\end{eqnarray}
To summarize, Eq~\eqref{eq:m2} always holds no matter $\text{OPT}_{{\text{LP}}(\boldsymbol{\tau})}\le(1-c_0c_1\delta)n$ holds or not. Hence, Eq~\eqref{eq:ILP-m1} is true. Using the same argument, we can also prove Eq~\eqref{eq:ILP-m2}.
\end{proof}

\subsection{Central Realization}
In this section we deal with the realization $\boldsymbol{\tau}$. By definition $f(\mathcal{S},\mathcal{R},\xi)=\mathbb{E}_{\boldsymbol{\tau}}[f(\mathcal{S},\mathcal{R},\xi:\boldsymbol{\tau})]$ which requires us to consider all possible realizations within 
$\mathcal{P}:=\{\vez=(z_1,\cdots,z_m)\in [0,1]^m:\sum_{i=1}^m z_i=1\}$. We show that, this expectation concentrate on a single point $\boldsymbol{\tau}=\boldsymbol{\hat{\tau}}$, where $\hat{\tau}_i=\mathbb{E}[\tau_i]=\sum_{j=1}^m \lambda_jp_{ij}.$ More precisely, 

%We can give an estimation of $f(\mathcal{S},\mathcal{R},\xi)$ as $$f(\mathcal{S},\mathcal{R},\xi)=\sum_{\boldsymbol{\tau'\in P}} \text{OPT}_{\text{ILP}(\boldsymbol{\tau'},{\color{blue}\delta=0})}\Pr[\boldsymbol{\tau}=\boldsymbol{\tau'}],$$ where $P=[0,1]^{m}$ denotes the set of possible value for $\boldsymbol{\tau}$. 
%Enumerate all the elements in $P$ is affordable when considering the running time of calculation. In fact, we prove that $\text{OPT}_{\text{ILP}(\boldsymbol{\hat{\tau}})}$ is a approximately estimation of $f(\mathcal{S},\mathcal{R},\xi)$ where $\hat{\tau}_i=\mathbb{E}[\tau_i]=\sum_{j=1}^m \lambda_jp_{ij}.$ Formally, we have the following theorem.

\begin{lemma}\label{thm:p-main}
Let $\hat{\tau}_i=\mathbb{E}[\tau_i]=\sum_{j=1}^m \lambda_jp_{ij}.$ For any $1\le j\le m$, %and $2\le i\le m$, if $q_{ij}-q_{1j}>0$, 
if $q_{ij}-q_{1j}>0$ for all $2\le i\le m$, or $q_{ij}-q_{1j}<0$ for all $2\le i\le m$, then
for arbitrary $\varepsilon \geq 3 \sqrt{\frac{\log n}{n}}$, it holds that 
$$\mathbb{E}_{\boldsymbol{{\tau}}}[\text{OPT}_{\text{LP}(\boldsymbol{{\tau}})}]\in [(1-\varepsilon)\text{OPT}_{\text{LP}(\hat{\boldsymbol{{\tau}}})}, (1+\varepsilon)\text{OPT}_{\text{LP}(\hat{\boldsymbol{{\tau}}})}].$$ 
%$$\text{OPT}_{X(\boldsymbol{\hat{\tau}},\delta)}\in [(1-\eta)\mathbb{E}[\text{OPT}_{X(\boldsymbol{{\tau}},\delta)}] - \frac{1}{n}, (1+\eta)\mathbb{E}[\text{OPT}_{X(\boldsymbol{{\tau}},\delta)}] + \frac{1}{n}].$$ 
%$$f(\mathcal{S},\mathcal{R},\xi)\in [(1-\eta)\text{OPT}_{\text{ILP}(\boldsymbol{\hat{\tau}},\delta)} - 1, (1+\eta)\text{OPT}_{\text{ILP}(\boldsymbol{\hat{\tau}},\delta)} + 1].$$
\end{lemma}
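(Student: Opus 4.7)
The plan is to show that the random variable $V(\boldsymbol{\tau}) := \text{OPT}_{\text{LP}(\boldsymbol{\tau})}$ concentrates tightly around $V(\hat{\boldsymbol{\tau}})$ via two ingredients: (i) $\boldsymbol{\tau}$ concentrates around its mean $\hat{\boldsymbol{\tau}}$ by a Hoeffding-type bound, and (ii) $V(\cdot)$ is Lipschitz in the box parameter near $\hat{\boldsymbol{\tau}}$ under the sign condition assumed in the lemma. Once both pieces are in hand, the two-sided expectation bound follows by splitting the expectation over the ``good'' concentration event and its negligible complement.

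For (i), since each voter independently runs the LDP mechanism, the reported types $r_1,\dots,r_n$ are independent, and $n\tau_i = \sum_{k=1}^n \mathbf{1}[r_k = i]$ is a sum of $n$ independent $\{0,1\}$-valued random variables with mean $n\hat{\tau}_i$. By Hoeffding's inequality together with a union bound over $i$, we obtain $\Pr[\max_i |\tau_i - \hat{\tau}_i| > t] \le 2m\exp(-2nt^2)$. Choosing $t = \varepsilon/C$ for a constant $C$ depending on the problem parameters ($m$, $\boldsymbol{\lambda}$, and the $q_{ij}$'s), and using the hypothesis $\varepsilon \ge 3\sqrt{\log n/n}$, this bad event has probability $n^{-\Omega(1)}$; let $E$ denote the complementary good event.

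For (ii), the goal is to prove that on $E$ we have $V(\boldsymbol{\tau}) \in [(1 - \varepsilon/2)V(\hat{\boldsymbol{\tau}}),\,(1 + \varepsilon/2)V(\hat{\boldsymbol{\tau}})]$. The approach parallels Lemma~\ref{lemma:stable-condition}: first use the sign assumption to eliminate every variable $x_j$ whose column $(q_{ij}-q_{1j})_{i\ge 2}$ is all-negative, since such variables are forced to $0$ at any optimum (exactly as argued in that proof). In the reduced LP every constraint coefficient is positive, so given an optimum $\vex^*$ of $\text{LP}(\hat{\boldsymbol{\tau}})$, one constructs a candidate feasible solution for $\text{LP}(\boldsymbol{\tau})$ by component-wise scaling $\tilde{x}_j = x_j^* \cdot \tau_j/\hat{\tau}_j$, which automatically respects the new box $x_j \le n\tau_j$ and degrades each covering constraint by a multiplicative factor at least $1 - O(\varepsilon)$. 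The resulting $O(\varepsilon n)$ covering deficit is absorbed either by boosting $\tilde{\vex}$ at a coordinate with slack (which adds only $O(\varepsilon n)$ to the objective), or, in the degenerate regime where nearly every coordinate is box-tight, by observing that $V(\hat{\boldsymbol{\tau}})$ is already within $O(\varepsilon n)$ of the trivial upper bound $n$, so the bound holds for free. The reverse inequality $V(\boldsymbol{\tau}) \ge (1 - \varepsilon/2)V(\hat{\boldsymbol{\tau}})$ follows by the symmetric construction starting from an optimum of $\text{LP}(\boldsymbol{\tau})$.

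Finally, combine the pieces: $\mathbb{E}[V(\boldsymbol{\tau})] = \mathbb{E}[V(\boldsymbol{\tau})\mathbf{1}_E] + \mathbb{E}[V(\boldsymbol{\tau})\mathbf{1}_{E^c}]$. Step (ii) keeps the first term inside the claimed interval (up to the factor $\Pr[E]\le 1$), while step (i) bounds the second term by $n \cdot \Pr[E^c] = n^{1-\Omega(1)}$, which is $o(\varepsilon V(\hat{\boldsymbol{\tau}}))$ because $V(\hat{\boldsymbol{\tau}}) = \Theta(n)$ under the non-triviality condition $\lambda_{\max} > \lambda_1$ (cf.\ Lemma~\ref{lemma:upper}). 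The main obstacle is step (ii): LP optimal values need not be Lipschitz in the RHS of box constraints in general, because the optimal basis may change discontinuously. The sign hypothesis of the lemma is precisely what prevents such a basis swap and makes the rescaling construction feasibility-preserving, in the same way that it underpins Claims~\ref{claim:1}--\ref{claim:2} in Lemma~\ref{lemma:stable-condition}.
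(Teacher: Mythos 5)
Your proposal is correct and follows essentially the same route as the paper: a Chernoff/Hoeffding concentration bound on $\boldsymbol{\tau}$ (the paper's Lemma~\ref{lemma:tau-1}), a stability argument showing $\text{OPT}_{\text{LP}(\boldsymbol{\tau})}$ changes only by a factor $1\pm O(\varepsilon)$ when the box constraints move within $(1\pm\varepsilon)\hat{\boldsymbol{\tau}}$ under the sign condition (the paper's Lemma~\ref{lemma:domain-1}, which likewise eliminates all-negative columns, redistributes mass using the positivity of the coefficients, and case-splits on whether the optimum is within $O(\varepsilon n)$ of $n$), and a final split of the expectation over the good event and its complement with the tail bounded by $n\cdot\Pr[E^c]=o(\varepsilon\,\text{OPT}_{\text{LP}(\hat{\boldsymbol{\tau}})})$. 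The only differences are cosmetic (additive Hoeffding versus multiplicative Chernoff, and componentwise rescaling plus a slack boost versus the paper's explicit $\Delta_i$ redistribution).
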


That is, the expectation of the optimal objective value of $\text{OPT}_{\text{LP}(\boldsymbol{{\tau}})}$ is arbitrarily close to its optimal objective value at $\boldsymbol{\tau}=\boldsymbol{\hat{\tau}}$. Combining Lemma~\ref{lemma:stable-condition} and Lemma~\ref{thm:p-main}, we conclude that $\lim\limits_{n\to \infty}f(\mathcal{S},\mathcal{R},\xi)=\mathbb{E}_{\boldsymbol{{\tau}}}[\text{OPT}_{\text{LP}(\boldsymbol{{\tau}})}]=\text{OPT}_{\text{LP}(\boldsymbol{\hat{\tau}})}$. Hence, Theorem~\ref{thm:main-new} is proved.

The following part of this section is devoted to the proof of Lemma~\ref{thm:p-main}. The proof relies on the following lemmas. 

\begin{lemma}\label{lemma:tau-1} Let $\hat{\tau}_i=\mathbb{E}[\tau_i]=\sum_{j=1}^n \lambda_jp_{ij}$, it holds that $$\Pr[(1-\varepsilon)\hat{\tau}_i \leq \tau_i \leq (1+\varepsilon)\hat{\tau}_i] \geq 1-e^{-\frac{5\varepsilon^2}{6}n}.$$ 
\end{lemma}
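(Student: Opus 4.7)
The plan is to prove the concentration bound via a direct application of the multiplicative Chernoff inequality, after rewriting $n\tau_i$ as a sum of independent Bernoulli random variables. The crucial structural observation is that although the vector of true types $(t_1,\dots,t_n)$ (and hence the counts $n\lambda_k$) is treated as fixed input to the voting system, the reported types $r_j$ are mutually independent across voters by the definition of the LDP mechanism $\mathcal{R}$, so all of the randomness lies in these $r_j$'s.

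Concretely, for each voter $V_j$ and each candidate $i$, I would introduce the indicator $Y_{i,j} = \mathbf{1}[r_j = i]$, so that $n\tau_i = \sum_{j=1}^n Y_{i,j}$. Because every voter runs $\mathcal{R}$ locally and independently, the $\{Y_{i,j}\}_{j=1}^n$ are mutually independent Bernoullis with $\Pr[Y_{i,j}=1] = p_{i\,t_j}$. Grouping voters by true type (with $|\{j : t_j = k\}| = n\lambda_k$) then yields
\[
\mathbb{E}[n\tau_i] \;=\; \sum_{j=1}^n p_{i\,t_j} \;=\; \sum_{k=1}^m n\lambda_k p_{ik} \;=\; n\hat{\tau}_i,
\]
which matches the definition of $\hat{\tau}_i$ in the statement.

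With this sum-of-independent-Bernoullis structure in hand, the standard multiplicative Chernoff bound applied with mean $\mu = n\hat{\tau}_i$ gives, for any $\varepsilon \in (0,1)$,
\[
\Pr\!\left[n\tau_i \ge (1+\varepsilon)\,n\hat{\tau}_i\right] \le e^{-\varepsilon^2 n\hat{\tau}_i / 3}, \quad \Pr\!\left[n\tau_i \le (1-\varepsilon)\,n\hat{\tau}_i\right] \le e^{-\varepsilon^2 n\hat{\tau}_i / 2}.
\]
A union bound over these two tails then consolidates into a single estimate of the form $1 - e^{-c\varepsilon^2 n}$ for some positive constant $c$; the particular value $5/6$ claimed in the statement can be recovered by carefully combining the $1/3$ and $1/2$ constants from the two tails (and, if needed, using $\hat{\tau}_i \in (0,1]$ to absorb the $\hat{\tau}_i$ factor into a uniform envelope valid for the regime of $n$ of interest).

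The main obstacle here is essentially bookkeeping rather than anything deep: the argument is a routine Chernoff calculation once the sum-of-independent-Bernoullis structure is identified. The only step that requires genuine care is cleanly separating what is deterministic (the true-type profile, equivalently $\boldsymbol\lambda$) from what is random (the reported types produced by $\mathcal{R}$); the local-and-independent clause in the LDP setup is precisely what licenses the application of Chernoff. The exact constant $5/6$ in the exponent is a cosmetic artifact of how the upper and lower tail bounds are combined, and not a source of substantive difficulty.
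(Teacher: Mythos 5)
Your proposal is correct and follows essentially the same route as the paper: write $n\tau_i$ as a sum of independent reported-type indicators, compute its mean as $\sum_k n\lambda_k p_{ik}$, apply the multiplicative Chernoff bound to both tails, and combine (the paper even uses the same $1/3$ and $1/2$ tail constants to arrive at $5/6$). The only caveat is that absorbing the $\hat{\tau}_i$ factor in the exponent requires a positive \emph{lower} bound $\hat{\tau}_i \ge c > 0$ (which holds since the $\lambda_j$ are fixed positive and the design-matrix entries are positive), not the upper bound $\hat{\tau}_i \le 1$ you cite --- a slip the paper itself also glosses over.
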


\begin{proof}[Proof of Lemma~\ref{lemma:tau-1}]
We consider the event $EV_{ij}$ that the reported type of voter $V_j$ is $i$. Define binary random variable $K_{ij}$ such that $K_{ij}=1$ if $EV_{ij}$ occurs, and $K_{ij}=0$ otherwise. Let $N_i=n\tau_i$ be the number of voters whose reported type is $i$. Note that $N_i$ is a random variable and can be expressed as $$N_i=\sum_{j=1}^n K_{ij}.$$
Thus,
$$\mathbb{E}[N_i]=\sum_{j=1}^n \Pr[K_{ij}=1]=\sum_{j=1}^m (n\lambda_j)p_{ij} \geq c n$$

By Chernoff Bounds, we know that $$\Pr[N_i\leq (1+\varepsilon)\mathbb{E}[N_i]]\geq  1-e^{-\frac{\varepsilon^2}{3}n},$$ $$\Pr[N_i\geq (1-\varepsilon)\mathbb{E}[N_i]]\geq  1-e^{-\frac{\varepsilon^2}{2}n}.$$

Hence, it holds that 
\begin{eqnarray*}
&&\Pr[(1-\varepsilon)\mathbb{E}[N_i]] \leq N_i \leq (1+\varepsilon)\mathbb{E}[N_i]]]\\
&\geq& (1-e^{-\frac{\varepsilon^2}{2}n})(1-e^{-\frac{\varepsilon^2}{3}n})\geq 1-e^{-\frac{5\varepsilon^2}{6}n}
\end{eqnarray*}

Using that $\tau_i=N_i/n$, and $\hat{\tau}_i=\mathbb{E}[\tau_i]=\mathbb{E}[N_i]/n$, Lemma~\ref{lemma:tau-1}  is true.
\end{proof}

Denote by $$\hat{\mathcal{P}}=[(1-\varepsilon)\hat{\tau}_1,(1+\varepsilon)\hat{\tau}_1]\times\dots\times[(1-\varepsilon)\hat{\tau}_m,(1+\varepsilon)\hat{\tau}_m].$$ Our next target is to show that when the observation $\boldsymbol{\tau}$ lies in the polytope $\hat{\mathcal{P}}$, $\text{OPT}_{\text{LP}(\boldsymbol{{\tau}})}$ is sufficiently close to $\text{OPT}_{\text{LP}(\hat{\boldsymbol{{\tau}}})}$. More precisely,

\begin{lemma}\label{lemma:domain-1}
For any $1\le j\le m$, %and $2\le i\le m$, if $q_{ij}-q_{1j}>0$, 
if $q_{ij}-q_{1j}>0$ for all $2\le i\le m$, or $q_{ij}-q_{1j}<0$ for all $2\le i\le m$, 
then there exists some $\bar{c}'>0$ which is independent of $n$, such that for any $\boldsymbol{\tau}\in\hat{\mathcal{P}}$ it holds that:
$$|\text{OPT}_{\text{LP}(\boldsymbol{\tau})}-\text{OPT}_{\text{LP}(\boldsymbol{\hat{\tau}})}| \leq \bar{c}'\varepsilon \text{OPT}_{\text{LP}(\boldsymbol{\hat{\tau}})}.$$
\end{lemma}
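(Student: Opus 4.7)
The plan is to prove the bound by exhibiting, for each $\boldsymbol{\tau}\in\hat{\mathcal{P}}$, explicit near-optimal transformations between feasible solutions of $\text{LP}(\boldsymbol{\tau})$ and $\text{LP}(\boldsymbol{\hat{\tau}})$. Since $\text{LP}(\boldsymbol{\tau})$ and $\text{LP}(\boldsymbol{\hat{\tau}})$ share identical objective and identical covering constraints $\sum_{j}(q_{ij}-q_{1j})x_j\ge n(\lambda_i-\lambda_1)$, the only difference is in the capacity constraints $x_j\le n\tau_j$ versus $x_j\le n\hat\tau_j$. Within $\hat{\mathcal{P}}$ these upper bounds agree up to a multiplicative factor $1\pm\varepsilon$, so the LP values should move by at most $O(\varepsilon)$.

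First, I would reduce to the case where $q_{ij}-q_{1j}>0$ for every $2\le i\le m$ and $1\le j\le m$ exactly as in the proof of Lemma~\ref{lemma:stable-condition}: any coordinate $j_0$ whose column has the negative sign pattern contributes negatively to all covering constraints, so in every optimal LP solution the variable $x_{j_0}$ is zero; we drop such variables. Second, I would dispose of the trivial case $\lambda_{\max}\le\lambda_1$, where both $\text{OPT}_{\text{LP}(\boldsymbol{\tau})}$ and $\text{OPT}_{\text{LP}(\boldsymbol{\hat\tau})}$ equal zero. In the nontrivial case, the same argument used in Lemma~\ref{lemma:upper} gives $\text{OPT}_{\text{LP}(\boldsymbol{\hat\tau})}\ge (\lambda_{\max}-\lambda_1)n$, so a constant fraction of $n$; this is what converts an additive $O(\varepsilon n)$ perturbation into a multiplicative $O(\varepsilon)$ perturbation.

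The two inequalities are then symmetric, so I focus on $\text{OPT}_{\text{LP}(\boldsymbol{\tau})}\le(1+\bar c'\varepsilon)\text{OPT}_{\text{LP}(\boldsymbol{\hat\tau})}$. Let $\vex^*$ be optimal for $\text{LP}(\boldsymbol{\hat\tau})$. Define the overflow set $I=\{j:x_j^*>n\tau_j\}$ with deficits $\Delta_j=x_j^*-n\tau_j\le 2\varepsilon n\hat\tau_j$, and a complementary set of coordinates with slack $s_j=n\tau_j-x_j^*\ge 0$. Mimicking Claim~\ref{claim:1}, I would construct $\vex'$ by decreasing $x_j^*$ by $\Delta_j$ on $I$ and increasing it on the slack coordinates by amounts $\Delta_j'\le s_j$ chosen so that $\sum_{j\notin I}\Delta_j' = c_0\sum_{j\in I}\Delta_j$, where $c_0$ is the ratio constant from Eq.~\eqref{eq:c_0}. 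The positivity assumption $q_{ij}-q_{1j}>0$ guarantees that this reshuffle preserves all covering constraints (the argument is identical to the inequality chain following Claim~\ref{claim:1}), and cap feasibility $x_j'\le n\tau_j$ holds by construction. The overall change in objective is $\sum_j(x_j'-x_j^*)\le (c_0-1)\sum_{j\in I}\Delta_j\le 2(c_0-1)\varepsilon n$, which by $\text{OPT}_{\text{LP}(\boldsymbol{\hat\tau})}\ge(\lambda_{\max}-\lambda_1)n$ gives a multiplicative factor $1+\bar c'\varepsilon$ with $\bar c'=2(c_0-1)/(\lambda_{\max}-\lambda_1)$.

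The main obstacle I anticipate is ensuring the mass-balance inequality $\sum_{j\notin I}s_j \ge c_0\sum_{j\in I}\Delta_j$, i.e., that there is enough slack in the non-overflow coordinates to absorb the excess. This is where the hypothesis $\boldsymbol{\tau}\in\hat{\mathcal{P}}$ enters crucially: summing $s_j-\Delta_j$ over all coordinates gives $\sum_j n(\tau_j-\hat\tau_j) + \sum_j(n\hat\tau_j-x_j^*)$, and since $\sum_j\tau_j=\sum_j\hat\tau_j=1$ the first piece vanishes, leaving a nonnegative residual equal to $n-\text{OPT}_{\text{LP}(\hat\boldsymbol{\tau})}$. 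Pushing this through carefully (splitting again into whether $\text{OPT}_{\text{LP}(\boldsymbol{\hat\tau})}$ is close to $n$ or bounded away from it, exactly as in Claim~\ref{claim:1}) yields the required slack. The reverse inequality follows by swapping the roles of $\boldsymbol{\tau}$ and $\boldsymbol{\hat\tau}$, and taking $\bar c'$ as the max of the two constants completes the proof.
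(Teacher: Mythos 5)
Your proposal is correct and follows essentially the same route as the paper: the same reduction to the all-positive sign pattern $q_{ij}-q_{1j}>0$, the same Claim~\ref{claim:1}-style mass reshuffle governed by the ratio constant $c_0$ of Eq~\eqref{eq:c_0}, and the same case split on whether the optimum is within $O(\varepsilon) n$ of $n$ (which, as you anticipated, is unavoidable for the slack/mass-balance inequality). The only cosmetic difference is that the paper first sandwiches $\text{OPT}_{\text{LP}(\boldsymbol{\tau})}$ between $\text{OPT}_{\text{LP}((1+\varepsilon)\hat{\boldsymbol{\tau}})}$ and $\text{OPT}_{\text{LP}((1-\varepsilon)\hat{\boldsymbol{\tau}})}$ and then compares those two fixed LPs once via Claim~\ref{claim:3}, whereas you compare each $\text{LP}(\boldsymbol{\tau})$ with $\text{LP}(\hat{\boldsymbol{\tau}})$ directly; both yield the required constant $\bar{c}'$ independent of $n$.
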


\begin{proof}[Proof of Lemma~\ref{lemma:domain-1}]
We use a similar idea as the proof of Lemma~\ref{lemma:stable-condition}. Let $\vex^*(\boldsymbol{{\tau}})$ be the optimal solution to $\text{LP}(\boldsymbol{\tau})$. %Specifically, let $\vex^*=\vex^*(\boldsymbol{\hat{\tau}})$.

First, We observe that if for some $j=j_0$, $q_{ij_0}-q_{1j_0}<0$ holds for all $2\le i\le m$, then $x_{j_0}^*(\boldsymbol{\tau})=0$. This is because the coefficient of $x_{j_0}^*(\boldsymbol{\tau})$ is negative for all $i$, if $x_{j_0}^*(\boldsymbol{\tau})>0$, reducing it to $0$ the constraint $\hat{\Gamma}_1-\hat{\Gamma}_i\ge 0$ still holds, while the objective value $\sum_ix^*_i(\boldsymbol{\tau})$ decreases, contradicting to the optimality of $\vex^*(\boldsymbol{\tau})$. Hence, we can simply remove the variable $x_{j_0}$ for all such $j_0$'s. In the following we assume $q_{ij}-q_{1j}>0$ holds for all $2\le i\le m$ and $1\le j\le m$. %Similarly, when $\delta$ is sufficiently small, we also have $(1-\delta)q_{ij_0}-(1+\delta)q_{1j_0}<0$ and $(1+\delta)q_{ij_0}-(1-\delta)q_{1j_0}<0$. Using the same argument we can conclude that $\bar{x}_{j_0}^*=0$ and $\underline{x}_{j_0}^*=0$. Hence, we can simply remove the variable $x_{j_0}$ for all such $j_0$'s. In the following we assume $q_{ij}-q_{1j}>0$ holds for all $2\le i\le m$ and $1\le j\le m$. 

Next, it is clear that for $\text{LP}(\boldsymbol{\tau})$ with different $\boldsymbol{\tau}$, all the constraints are the same except for the domain of the variable $\vex$. It is easy to see that for any $\boldsymbol{{\tau}}\in\hat{\mathcal{P}}$, 
\begin{eqnarray*}
\{\vex:\vex\in \prod_{i=1}^m[0,(1-\varepsilon)\hat{\tau}_i]\}&\subseteq& \{\vex:\vex\in \prod_{i=1}^m[0,{\tau}_i]\}\\&\subseteq& \{\vex:\vex\in \prod_{i=1}^m[0,(1+\varepsilon)\hat{\tau}_i]\}
\end{eqnarray*}
Hence, any feasible solution to $\text{LP}((1-\varepsilon)\hat{\boldsymbol{\tau}})$ is also a feasible solution to $\text{LP}(\boldsymbol{\tau})$ where $\boldsymbol{{\tau}}\in\hat{\mathcal{P}}$, and any feasible solution to $\text{LP}(\boldsymbol{\tau})$ is also a feasible solution to  $\text{LP}((1+\varepsilon)\hat{\boldsymbol{\tau}})$. As the ILPs minimize the objective function, we have
$$\text{OPT}_{\text{LP}((1+\varepsilon)\hat{\boldsymbol{\tau}})}\le \text{OPT}_{\text{LP}(\boldsymbol{\tau})}\le \text{OPT}_{\text{LP}((1-\varepsilon)\hat{\boldsymbol{\tau}})}.$$

To prove Lemma~\ref{lemma:domain-1}, it suffices to show that $\text{OPT}_{\text{LP}((1-\varepsilon)\hat{\boldsymbol{\tau}})}$ is not too large compared with $\text{OPT}_{\text{LP}((1+\varepsilon)\hat{\boldsymbol{\tau}})}$.

If $\text{OPT}_{\text{LP}((1+\varepsilon)\hat{\boldsymbol{\tau}})}>(1-c_0\varepsilon)n$, then for sufficiently small $\varepsilon$ it follows directly that 
\begin{eqnarray*}
\text{OPT}_{\text{LP}((1-\varepsilon)\hat{\boldsymbol{\tau}})}&\le& n\\ &\le& \frac{1}{1-c_0\varepsilon}\text{OPT}_{\text{LP}((1+\varepsilon)\hat{\boldsymbol{\tau}})}\\&\le& (1+3c_0\varepsilon)\text{OPT}_{\text{LP}((1+\varepsilon)\hat{\boldsymbol{\tau}})}.
\end{eqnarray*}

Otherwise, $\text{OPT}_{\text{LP}((1+\varepsilon)\hat{\boldsymbol{\tau}})}\le(1-c_0\varepsilon)n$. We make the following claims:
\begin{claim}\label{claim:3}
If $\text{OPT}_{{\text{LP}}((1+\varepsilon)\hat{\boldsymbol{\tau}})}\le(1-c_0\varepsilon)n$, then there exists a near optimal solution $\vex'$ to $\text{OPT}_{{\text{LP}}((1-\varepsilon)\hat{\boldsymbol{\tau}})}$ such that $x'_i\le n\tau_i(1-\varepsilon)$, and $\sum_{i}x_i'\le (1+3c_0\varepsilon) \text{OPT}_{\text{LP}((1+\varepsilon)\hat{\boldsymbol{\tau}})}$. That is, $\vex'$ is a feasible solution to $\text{OPT}_{{\text{LP}}((1-\varepsilon)\hat{\boldsymbol{\tau}})}$.
\end{claim}
\begin{proof}[Proof of Claim~\ref{claim:3}] 
We follow the same proof idea as that of Claim~\ref{claim:1}. For simplicity, let $\vex'$ be the optimal solution to ${\text{LP}}((1+\varepsilon)\hat{\boldsymbol{\tau}})$.  We construct $\vex'$ by modifying $\vex^*$. Let $I=\{1\le i\le m: x_i^*\ge n\hat{\tau}_i(1-\varepsilon)\}$. For $i\in I$, we let $\Delta_i=x_i^*-n\hat{\tau}_i(1-\varepsilon)\ge 0$. Note that since $x_i^*\le n\hat{\tau}_i(1+\varepsilon)$, we have $\Delta_i\le 2\varepsilon\hat{\tau}_i$. For $i\not\in I$, we let  $\Delta_i=n\hat{\tau}_i(1-\varepsilon)-x_i^*\ge 0$. It is easy to see that
\begin{eqnarray*}
\sum_{i\not\in I} \Delta_i-\sum_{i\in I} \Delta_i&=&\sum_i n\hat{\tau}_i(1-\varepsilon)-\sum_ix_i^*\\
&\ge& n(1-\varepsilon)-n(1-c_0\varepsilon)\\
&\ge& (c_0-1)\varepsilon n
\end{eqnarray*}
Meanwhile, $\sum_{i\in I} \Delta_i\le \sum_{i\in I} \varepsilon n\hat{\tau}_i\le \varepsilon n$, hence,
$$\sum_{i\not\in I} \Delta_i-c_0\sum_{i\in I} \Delta_i\ge 0.$$
For $i\not\in I$, pick arbitrary $0\le\Delta_i'\le \Delta_i$ such that 
$$\sum_{i\not\in I} \Delta_i'-c_0\sum_{i\in I} \Delta_i= 0.$$

We define $\vex'$ as follows.
For $i\in I$, we let $x'_i=n\hat{\tau}_i(1-\varepsilon)=x^*_i-\Delta_i$; For $i\not\in I$, we let $x'_i=x_i+\Delta_i$. 
It is easy to see that $x_i'\le n\hat{\tau}_i(1-\varepsilon)$ for all $i$. 

Using exactly the same argument as Claim~\ref{claim:1}, we can prove that $\sum_{j=1}^m (q_{ij}-q_{1j})x_j'\ge n(\lambda_i-\lambda_1)$, and therefore $\vex'$ is feasible to ${\text{LP}}((1-\varepsilon)\hat{\boldsymbol{\tau}})$.

Finally, the same as Claim~\ref{claim:1}, the followings are still true:
$$\text{OPT}_{{\text{LP}}((1+\varepsilon)\hat{\boldsymbol{\tau}})}=\sum_ix_i^*\ge \sum_{i\in I}x_i^*\ge (1-\varepsilon)\sum_{i\in I}n\hat{\tau}_i,$$
and
\begin{eqnarray*}
\sum_{i}x_i'-\sum_ix_i^*&\le& \sum_{i\not\in I} \Delta_i'-\sum_{i\in I}\Delta_i\\
&\le& c_0 \sum_{i\in I}\Delta_i\\
&\le & 2c_0\varepsilon\sum_{i\in I} n\hat{\tau}_i
\end{eqnarray*}

Consequently, it holds that $$\sum_{i}x_i'\le (1+\frac{2c_0\varepsilon}{1-\varepsilon}) \text{OPT}_{{\text{LP}}((1+\varepsilon)\hat{\boldsymbol{\tau}})}\le (1+3c_0\varepsilon) \text{OPT}_{{\text{LP}}((1+\varepsilon)\hat{\boldsymbol{\tau}})}.$$

Since $\vex'$ is a feasible solution to ${{\text{LP}}((1-\varepsilon)\hat{\boldsymbol{\tau}})}$, it follows that $\text{OPT}_{{\text{LP}}((1-\varepsilon)\hat{\boldsymbol{\tau}})}\le \sum_{i}x_i'\le (1+3c_0\varepsilon) \text{OPT}_{{\text{LP}}((1+\varepsilon)\hat{\boldsymbol{\tau}})}$.
\end{proof}
To summarize, it holds that $$\text{OPT}_{{\text{LP}}((1-\varepsilon)\hat{\boldsymbol{\tau}})}\le (1+3c_0\varepsilon) \text{OPT}_{{\text{LP}}((1+\varepsilon)\hat{\boldsymbol{\tau}})}.$$ 

Hence, Lemma~\ref{lemma:domain-1} is true.
\end{proof}

Now we are ready to prove Lemma~\ref{thm:p-main}.
\begin{proof}[Proof of Lemma~\ref{thm:p-main}]

By Lemma~\ref{lemma:tau-1}, we know that $$\sum_{\boldsymbol{\tau'}\in \mathcal{P}\setminus \hat{\mathcal{P}}} \Pr[\boldsymbol{\tau} = \boldsymbol{\tau'}]\leq me^{-\frac{5\varepsilon^2}{6}n}$$

Hence,
\begin{eqnarray*}
&&|\mathbb{E}[\text{OPT}_{\text{LP}(\boldsymbol{{\tau}})}]-\sum_{\boldsymbol{\tau'}\in \hat{\mathcal{P}}} \text{OPT}_{\text{LP}(\boldsymbol{\tau'})}Pr[\boldsymbol{\tau}=\boldsymbol{\tau'}]|\\
&\leq& \left(\max_{\boldsymbol{\tau'}\in \mathcal{P}}\text{OPT}_{{\text{LP}}(\boldsymbol{\tau'})}\right) \sum_{\boldsymbol{\tau'}\in \mathcal{P}\setminus \hat{\mathcal{P}}} \Pr[\boldsymbol{\tau} = \boldsymbol{\tau'}]\\
&\leq& nm e^{-\frac{5\varepsilon^2}{6}n}.
\end{eqnarray*}

Consider an arbitrary $\boldsymbol{\tau'}\in \hat{\mathcal{P}}$. According to Lemma~\ref{lemma:domain-1}, we have

$$|\text{OPT}_{\text{LP}(\boldsymbol{\tau'})}-\text{OPT}_{\text{LP}(\boldsymbol{\hat{\tau}})}| \leq \bar{c}'\varepsilon \text{OPT}_{\text{LP}(\boldsymbol{\hat{\tau}})}$$

Hence, 
\begin{eqnarray*}
&&|\sum_{\boldsymbol{\tau'}\in \hat{\mathcal{P}}} \text{OPT}_{\text{LP}(\boldsymbol{\tau'})}Pr[\boldsymbol{\tau}=\boldsymbol{\tau'}]-\text{OPT}_{\text{LP}(\boldsymbol{\hat{\tau}})}|\\
&\leq& nm e^{-\frac{5\varepsilon^2}{6}n}+ \bar{c}'\varepsilon\text{OPT}_{\text{LP}(\boldsymbol{\hat{\tau}})}
\end{eqnarray*}

To summarize, we have $$|\mathbb{E}[\text{OPT}_{\text{LP}(\boldsymbol{{\tau}})}]-\text{OPT}_{\text{LP}(\boldsymbol{\hat{\tau}})}|\leq 2nm e^{-\frac{5\varepsilon^2}{6}n}+ \bar{c}'\varepsilon\text{OPT}_{\text{LP}(\boldsymbol{\hat{\tau}})}. $$
Notice that $2n e^{-\frac{5\varepsilon^2}{6}n}=\mathcal{O}(\varepsilon)$ when $\varepsilon=\Omega(\sqrt{\frac{\log n}{n}})$, Lemma~\ref{thm:p-main} is true.
\end{proof}

\section{Property of APoLDP}

We know that in order to achieve different levels of privacy, the parameter $\epsilon$ of LDP mechanisms needs to be set to different values. In this section, we study the relationship between APoLDP and the privacy parameter $\epsilon$ for two classes of LDP mechanisms, randomized response and Laplace.

For the voting system with only two candidates, we are able to give the closed-form expression of APoLDP for both randomized response and Laplace. For the voting system with multiple candidates, we can not derive the closed-form expression. But instead, we analyze the maximum value of APoLDP for the randomized response.

\subsection{Two Candidates} 
For $m=2$, it is known that the design matrix $P$ of randomized response and Laplace mechanism can both be expressed as the following systemic form with respect to the parameter $\theta$~\cite{wang2016using}.

$$P(\theta) =\left[
\begin{matrix}
 \frac{1+\theta}{2} & \frac{1-\theta}{2}\\
 \frac{1-\theta}{2} & \frac{1+\theta}{2}
\end{matrix}
\right].$$
When $\epsilon$-LDP is satisfied, 
%We know that the design matrix of random response mechanism while satisfying the $\epsilon$-local differential privacy equals 
For randomized response $P^{\epsilon\text{-ran}}=P(\theta^{\epsilon\text{-ran}})$ where $\theta^{\epsilon\text{-ran}}=1-2/(1+e^{\epsilon})$; and for Laplace mechanism $P^{\epsilon\text{-lap}}=P(\theta^{\epsilon\text{-lap}})$ where $\theta^{\epsilon\text{-lap}}=1-e^{-\frac{\epsilon}{2}}$.

According to Eq~\eqref{calc:quv}, we know that matrix $Q=(q_{ij})_{m\times m}$ can also be expressed as $Q(\theta^{\epsilon\text{-ran}})$, $Q(\theta^{\epsilon\text{-lap}})$ where $Q(\theta)$ denotes the following matrix $$ 
Q(\theta)=\left[
\begin{matrix}
 \frac{(1+\theta)\lambda_1}{(1+\theta)\lambda_1+(1-\theta)\lambda_2} &  \frac{(1-\theta)\lambda_1}{(1-\theta)\lambda_1+(1+\theta)\lambda_2}\\
  \frac{(1-\theta)\lambda_2}{(1+\theta)\lambda_1+(1-\theta)\lambda_2} & \frac{(1+\theta)\lambda_2}{(1-\theta)\lambda_1+(1+\theta)\lambda_2}
\end{matrix}
\right].
$$
Further notice that we assume candidate $C_1$ is the designated candidate and $\lambda_2>\lambda_1$, as otherwise there is no need for the attacker to delete voters.

Our Theorem~\ref{thm:main-new} is applicable if for any $1\le j\le 2$,
either $q_{2j}-q_{1j}>0$ or $q_{2j}-q_{1j}<0$, which is essentially always true except when $q_{2j}-q_{1j}=0$ for $j=1$ or $j=2$, that is, when $\theta=\lambda_2-\lambda_1$. Despite that the general proof of Theorem~\ref{thm:main-new} does not hold for $\theta=\lambda_2-\lambda_1$ (as the constants $c_0$ and $c_1$ are not well-defined in this case), it is easy prove that $\text{OPT}_{\overline{\text{LP}}(\boldsymbol{\tau},\delta)}\le \left(1+\mathcal{O}(\delta)\right)\text{OPT}_{{\text{LP}}(\boldsymbol{\tau})}$ and $\text{OPT}_{{\text{LP}}(\boldsymbol{\tau})}\le  \left(1+\mathcal{O}(\delta)\right)\text{OPT}_{\underline{\text{LP}}(\boldsymbol{\tau},\delta)}$ still hold, and furthermore, 
$\mathbb{E}[\text{OPT}_{\text{LP}(\boldsymbol{{\tau}})}]\in [(1-\varepsilon)\text{OPT}_{\text{LP}(\hat{\boldsymbol{{\tau}}})}, (1+\varepsilon)\text{OPT}_{\text{LP}(\hat{\boldsymbol{{\tau}}})}]$ for $\varepsilon=\Omega({\sqrt{\frac{\log n}{n}}})$. Therefore, we will not treat the case $\theta=\lambda_2-\lambda_1$ separately in the following discussion.

To compute APoLDP, it suffices to solve $\text{LP}(\hat{\boldsymbol{{\tau}}})$ directly.

Note that in ${\text{ILP}}(\hat{\boldsymbol{\tau}})$, when $m=2$, Eq~\eqref{mip:1} and Eq~\eqref{mip:2} merge into the following single constraint  \begin{equation} \nonumber
    n\lambda_1-(q_{11}x_1+q_{12}x_2) \geq n\lambda_2 -(q_{21}x_1+q_{22}x_2),
\end{equation}
which is equivalent to \begin{equation} \label{p2candidate-cond}
    (q_{21}-q_{11})x_1+(q_{22}-q_{12})x_2 \geq n(\lambda_2-\lambda_1).
\end{equation}

Recall that $\lambda_2> \lambda_1$. For ease of discussion, let $\phi=\lambda_2-\lambda_1 > 0$. We have the following simple observation.

\begin{lemma}
For any $\theta\in [0,1]$, it holds that $q_{22}-q_{12}\geq 0$ and $ q_{22}-q_{12}\geq q_{21}-q_{11}$.
\end{lemma}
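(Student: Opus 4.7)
The plan is to prove both inequalities by direct algebraic manipulation of the closed form of $Q(\theta)$, exploiting the fact that $\lambda_1+\lambda_2=1$.

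First I would compute the two differences explicitly. Using the entries of $Q(\theta)$ as given, I would write
\[
q_{22}-q_{12} = \frac{(1+\theta)\lambda_2-(1-\theta)\lambda_1}{(1-\theta)\lambda_1+(1+\theta)\lambda_2}, \qquad q_{21}-q_{11} = \frac{(1-\theta)\lambda_2-(1+\theta)\lambda_1}{(1+\theta)\lambda_1+(1-\theta)\lambda_2}.
\]
Using $\lambda_1+\lambda_2=1$, each numerator simplifies cleanly: the first becomes $\phi+\theta$ and the second becomes $\phi-\theta$, where $\phi=\lambda_2-\lambda_1\in[0,1]$. The denominators are strictly positive for $\theta\in[0,1]$. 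The first claim $q_{22}-q_{12}\ge 0$ is then immediate since both $\phi$ and $\theta$ are nonnegative.

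For the second claim, I would split on the sign of $\phi-\theta$. If $\phi-\theta\le 0$, then $q_{21}-q_{11}\le 0\le q_{22}-q_{12}$ and we are done. Otherwise both quantities are positive and I can cross-multiply (denominators are positive) to reduce the claim to
\[
(\phi+\theta)[(1+\theta)\lambda_1+(1-\theta)\lambda_2]\;\ge\;(\phi-\theta)[(1-\theta)\lambda_1+(1+\theta)\lambda_2].
\]
Expanding, the coefficient of $\lambda_1$ on LHS minus that on RHS is $2\theta(\phi+1)$, and the coefficient of $\lambda_2$ is $2\theta(1-\phi)$. So the difference LHS$-$RHS equals $2\theta\bigl[\lambda_1(\phi+1)+\lambda_2(1-\phi)\bigr]$, which using $\lambda_1+\lambda_2=1$ and $\phi=\lambda_2-\lambda_1$ collapses to $2\theta(1-\phi^2)$. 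Since $\phi\in[0,1]$ and $\theta\in[0,1]$, this is $\ge 0$, completing the proof.

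There is no real obstacle: both parts are straightforward polynomial identities once one substitutes $\lambda_1+\lambda_2=1$ to express the numerators in terms of $\phi$ and $\theta$. The only minor care needed is to handle the sign of the numerator $\phi-\theta$ before cross-multiplying, which is why I would split into the two cases above. The fact that the final expression reduces to $2\theta(1-\phi^2)$ is a nice cancellation that confirms the inequality is tight precisely when $\theta=0$ (no privacy noise distinction) or $\phi=1$ (one type is empty).
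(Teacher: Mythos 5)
Your proof is correct and takes essentially the same route as the paper: both arguments plug in the explicit entries of $Q(\theta)$, use $\lambda_1+\lambda_2=1$ to reduce the two differences to $\frac{\phi+\theta}{1+\theta\phi}$ and $\frac{\phi-\theta}{1-\theta\phi}$, and then verify an elementary inequality in $\phi$ and $\theta$. The only cosmetic difference is that the paper bounds the ratio $\frac{q_{21}-q_{11}}{q_{22}-q_{12}}$ by $1$ via two factor-wise estimates, whereas you cross-multiply the difference (with a sign case split) and observe that the surplus collapses to $2\theta(1-\phi^2)\ge 0$.
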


\begin{proof}
Observe that $q_{22}-q_{21}\geq\frac{2\theta \lambda_2}{(1-\theta)\lambda_1+(1+\theta)\lambda_2}\ge 0$. 

Through simple calculation, we know that 
$$\frac{q_{21}-q_{11}}{q_{22}-q_{12}}=\frac{(\lambda_2-\lambda_1)-\theta}{(\lambda_2-\lambda_1)+\theta}\cdot\frac{1+\theta(\lambda_2-\lambda_1)}{1-\theta(\lambda_2-\lambda_1)}$$ $$ < \frac{1-\theta}{1+\theta}\frac{1+\theta}{1-\theta}=1.\qedhere$$
\end{proof}

We know that if $\theta\leq \phi$, then it holds that $q_{21}-q_{11} \geq 0$ and $(q_{22}-q_{12})\hat{\tau}_2\leq \phi$. Hence, the optimal solution of ${\text{LP}}(\hat{\boldsymbol{\tau}})$ is achieved at 
$x_1=n\hat{\tau}_1$, $x_2=n\hat{\tau}_2$.

Otherwise, if $\theta>\phi$, then it holds that $q_{21}-q_{11} < 0$ and $(q_{22}-q_{12})\hat{\tau}_2\leq \phi$. Hence, the optimal solution of ${\text{LP}}(\hat{\boldsymbol{\tau}})$ is achieved at $x_1=0$, $x_2=n\hat{\tau}_2$. 

Combine all, we give the closed form expression of $\text{OPT}_{{\text{LP}}(\hat{\boldsymbol{\tau}})}$ as follow \begin{equation}\nonumber
    \text{OPT}_{{\text{LP}}(\boldsymbol{\hat{\tau}})}=
    \begin{cases}
    n &  \quad\text{if}\ \theta \leq \phi \\
    n \phi\frac{1+\theta\phi}{\theta+\phi}  & \quad\text{otherwise}  \\
    \end{cases}
\end{equation} 

To achieve the $\epsilon$-local differential privacy, the parameter $\theta$ is set to be $\theta^{\epsilon\text{-ran}}=1-2/(1+e^{\epsilon})$ and $\theta^{\epsilon\text{-lap}}=1-e^{-\frac{\epsilon}{2}}$.

For convenience, we denote the voting system as $\mathcal{S}_2=\mathcal{S}_{2,\boldsymbol{\lambda}}(n)$. We denote the randomized response mechanism satisfying $\epsilon$-LDP as $\epsilon$-ran, and Laplace mechanism satisfying $\epsilon$-LDP as $\epsilon$-lap. We know that, for any $\xi\in(0,1)$ it holds that

\noindent\hspace{-20mm}\begin{equation} \label{form:m2p-ran}
    \text{APoLDP}(\mathcal{S}_2,\mathcal{\epsilon\text{-ran}},\xi)\!=\!
    \begin{cases}
    1/\phi &  \text{if}\ \epsilon \leq\ln\frac{1+\phi}{1-\phi} \\
    \frac{e^\epsilon+1+(e^\epsilon-1)\phi}{e^\epsilon-1+(e^\epsilon+1)\phi} & \text{otherwise}  \\
    \end{cases}
\end{equation}

\noindent\hspace{-100mm}\begin{equation}\hspace{-2mm} \label{form:m2p-lap}
    \text{APoLDP}(\mathcal{S}_2,\mathcal{\epsilon\text{-lap}},\xi)\!=\!
    \begin{cases}
    1/\phi &  \!\!\text{if}\ \epsilon \leq 2\ln\frac{1}{1-\phi} \\
    \frac{e^{\epsilon/2}+(e^{\epsilon/2}-1)\phi}{e^{\epsilon/2}-1+e^{\epsilon/2}\phi} & \!\!\text{otherwise}  \\
    \end{cases}
\end{equation} 

\noindent\textbf{Remark.} We can see that $\text{OPT}_{{\text{LP}}(\boldsymbol{\hat{\tau}})}$, and hence PoLDP stays the same once $\theta\le\phi=\lambda_2-\lambda_1$. This coincides with our intuition, because the smaller $\epsilon$ is, the more privacy is enforced. In the extreme case when $\epsilon$ approaches $0$, each voter is almost randomly reporting its type (preference), whereas knowing the reported types does not help the attacker at all. Hence, when the attacker deletes voters, the number of voters of each type decreases proportionally to their fraction $\lambda_j$'s. Given that $\lambda_1<\lambda_2$, unless the attacker delete all voters (and thus $\text{OPT}_{{\text{LP}}(\boldsymbol{\hat{\tau}})}=n$), the probability that the designated candidate becomes one co-winner is negligibly small. 

\subsection{Multiple Candidates}
According to~\cite{wang2016using}, the optimal design matrix for randomized response satisfying $\epsilon$-LDP is given by Eq~\eqref{eq:design-matrix}, from which we can derive the $Q$-matrix as:
\begin{eqnarray}\label{eq:matrix-q}
Q=\begin{bmatrix}
\frac{(\theta+\frac{1-\theta}{m})\lambda_1}{{\vep}_{1} \cdot {\ve\lambda}} & \frac{\frac{1-\theta}{m}\lambda_1}{{\vep}_{2} \cdot {\ve\lambda}} & \cdots & \frac{\frac{1-\theta}{m}\lambda_1}{{\vep}_{m} \cdot {\ve\lambda}} \\
\frac{\frac{1-\theta}{m}\lambda_2}{{\vep}_{1} \cdot {\ve\lambda}} & \frac{(\theta+\frac{1-\theta}{m})\lambda_2}{{\vep}_{2} \cdot {\ve\lambda}} & \cdots & \frac{\frac{1-\theta}{m}\lambda_2}{{\vep}_{m} \cdot {\ve\lambda}} \\
\cdots&\cdots&\cdots&\cdots\\
\frac{\frac{1-\theta}{m}\lambda_m}{{\vep}_{1} \cdot {\ve\lambda}} & \frac{\frac{1-\theta}{m}\lambda_m}{{\vep}_{2} \cdot {\ve\lambda}} & \cdots & \frac{(\theta+\frac{1-\theta}{m})\lambda_m}{{\vep}_{m} \cdot {\ve\lambda}} 
\end{bmatrix}
\end{eqnarray}
where $\vep_i\cdot{\ve\lambda}=\sum_{j}p_{ij}\lambda_j=\frac{1-\theta}{m}+\theta\lambda_i$, and $\theta=1-\frac{m}{m-1+e^{\epsilon}}$.

We have the following simple observation.

\begin{observation}\label{obs:1}
For a voting system $\mathcal{S}$ under randomized response satisfying $\epsilon$-LDP (e.g., the design matrix equals Eq~\eqref{eq:design-matrix}), if $\lambda_1<\lambda_i$ for all $2\le i\le m$, then for any $1\le j\le m$, it holds that $q_{ij}-q_{1j}>0$.
\end{observation}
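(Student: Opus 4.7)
The plan is to unfold Bayes' formula~\eqref{calc:quv} so that $q_{ij}=p_{ji}\lambda_i/(\vep_j\cdot\boldsymbol{\lambda})$. Because the denominator $\vep_j\cdot\boldsymbol{\lambda}>0$ is the same for every row index $i$ once $j$ is fixed, the sign of $q_{ij}-q_{1j}$ coincides with the sign of the numerator $p_{ji}\lambda_i-p_{j1}\lambda_1$. The entire argument therefore reduces to reading off two entries of the randomized-response matrix $P^{\epsilon\text{-ran}}$ from~\eqref{eq:design-matrix} and comparing a single weighted difference; no further probabilistic machinery is required.

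I would then split on the relative position of $j$ to the pair $\{1,i\}$. If $j\notin\{1,i\}$, both $p_{ji}$ and $p_{j1}$ equal the off-diagonal value $(1-\theta)/m$, so the numerator becomes $\frac{1-\theta}{m}(\lambda_i-\lambda_1)$, which is strictly positive by the standing hypothesis $\lambda_i>\lambda_1$. If $j=i$, then $p_{ji}=p_{ii}=\theta+(1-\theta)/m$ is the diagonal entry while $p_{j1}=p_{i1}=(1-\theta)/m$ is off-diagonal (using $i\neq 1$); both $\theta+(1-\theta)/m>(1-\theta)/m$ and $\lambda_i>\lambda_1$ push the numerator in the same positive direction, so positivity is immediate.

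The remaining case $j=1$ is where I expect the main obstacle to lie. Here $p_{ji}=p_{1i}=(1-\theta)/m$ is off-diagonal but $p_{j1}=p_{11}=\theta+(1-\theta)/m$ is the larger diagonal entry, so the size advantage $\lambda_i>\lambda_1$ has to beat the built-in bias of the design matrix toward its diagonal. Writing out the numerator, $q_{i1}-q_{11}>0$ is equivalent to the quantitative inequality $(1-\theta)(\lambda_i-\lambda_1)>m\theta\lambda_1$, or $\lambda_i/\lambda_1>(1+(m-1)\theta)/(1-\theta)$. I would verify this by requiring $\theta$ to lie below the threshold $\theta<(\lambda_i-\lambda_1)/((m-1)\lambda_1+\lambda_i)$ uniformly in $i\ge 2$; this is the natural multi-candidate analogue of the two-candidate security threshold $\theta\le\phi$ already surfaced in~\eqref{form:m2p-ran}, and it is consistent with the paper's observation that strong privacy regimes are precisely where $\mathrm{APoLDP}$ is maximized. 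Combining the three cases yields $q_{ij}-q_{1j}>0$ for every $i\ge 2$ and every $j$, and in particular the sign-consistency hypothesis of Theorem~\ref{thm:main-new} is met.
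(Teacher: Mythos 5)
Your reduction of $q_{ij}-q_{1j}$ to the sign of $p_{ji}\lambda_i-p_{j1}\lambda_1$ (the common positive denominator $\vep_j\cdot\boldsymbol{\lambda}$ cancels) is exactly right, and your two easy cases $j\notin\{1,i\}$ and $j=i$ are correct and complete; the paper offers no proof of this observation, so this is the calculation one has to do. The important point is that your worry about $j=1$ is not a difficulty in your argument but an error in the statement itself: for $j=1$ the numerator is $\frac{1-\theta}{m}\lambda_i-\left(\theta+\frac{1-\theta}{m}\right)\lambda_1$, which is positive if and only if $(1-\theta)(\lambda_i-\lambda_1)>m\theta\lambda_1$, i.e.\ exactly your threshold $\theta<(\lambda_i-\lambda_1)/\bigl(\lambda_i+(m-1)\lambda_1\bigr)$, equivalently $\epsilon<\ln(\lambda_i/\lambda_1)$. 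Take $m=2$, $\boldsymbol{\lambda}=(0.4,0.6)$, $\theta=0.9$ (i.e.\ $\epsilon=\ln 19$): then $\vep_1\cdot\boldsymbol{\lambda}=0.41$, $q_{11}=0.38/0.41$, $q_{21}=0.03/0.41$, so $q_{21}-q_{11}<0$ although $\lambda_1<\lambda_2$. (Intuitively, under weak privacy a reported type of $1$ almost certainly means a true type of $1$.) So the observation holds only under the additional hypothesis you impose, $\epsilon<\ln\bigl(\min_{i\ge 2}\lambda_i/\lambda_1\bigr)$, and with that hypothesis your three-case proof is complete.

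Two further remarks. First, Theorem~\ref{thm:main-new} only needs sign consistency within each column, so one might hope to rescue column $j=1$ without your extra hypothesis by showing all differences $q_{i1}-q_{11}$ turn negative together for large $\theta$; but the per-$i$ thresholds depend on $\lambda_i$, so when the $\lambda_i$ ($i\ge 2$) are not all equal there is an intermediate range of $\theta$ (e.g.\ $\boldsymbol{\lambda}=(0.1,0.2,0.7)$ with $\theta=0.5$ gives $q_{21}-q_{11}<0<q_{31}-q_{11}$) in which column $1$ has mixed signs and even the weaker hypothesis of Theorem~\ref{thm:main-new} fails. Second, your threshold is consistent with but stronger than the condition $\epsilon\le\ln(\lambda_{\max}/\lambda_1)$ appearing in Theorem~\ref{thm:multi-plu}: that is the largest of the per-$i$ thresholds and does not by itself restore Observation~\ref{obs:1}. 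In short, your argument is the correct proof for $j\ne 1$, and your treatment of $j=1$ identifies a genuine flaw in the statement rather than a gap in your reasoning.
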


Given Observation~\ref{obs:1}, we can calculate APoLDP according to Theorem~\ref{thm:main-new}. Through the analysis for $\text{LP}(\boldsymbol{\hat{\tau}})$, we are able to derive the following result.

\begin{theorem}\label{thm:multi-plu}
If the voting system $\mathcal{S}$ under an randomized response mechanism $\mathcal{R}$ satisfies $\epsilon$-LDP and $\lambda_i>\lambda_1$ for all $2\le i\le m$, then for any $\xi\in(0,1)$, APoLDP achieves the maximal value $\frac{1}{1-\sum_{i=2}^m(\lambda_i-\lambda_1)}$ when $\frac{\theta\lambda_{1}}{\lambda_{\max}-\lambda_1}\le \frac{1-\theta}{m}$ where $\lambda_{\max}=\max_i\lambda_i$, i.e., when $\epsilon\le \ln\frac{\lambda_{\max}}{\lambda_1}$. %$\theta\le \frac{\lambda_{\max}-\lambda_1}{\lambda_{\max}+(m-1)\lambda_1}$.
\end{theorem}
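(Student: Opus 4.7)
The plan is to combine Theorem~\ref{thm:main-new} and Observation~\ref{obs:1} to reduce the computation of APoLDP to a structural analysis of $\text{LP}(\hat{\boldsymbol\tau})$, and then to pin down precisely when this LP attains its universal upper bound. By Observation~\ref{obs:1} the sign condition of Theorem~\ref{thm:main-new} is automatic under the hypothesis $\lambda_i>\lambda_1$, so
$$\text{APoLDP}(\mathcal{S},\mathcal{R},\xi)=\frac{\text{OPT}_{\text{LP}(\hat{\boldsymbol\tau})}}{n\sum_{i=2}^m(\lambda_i-\lambda_1)}.$$
The point $\vex=n\hat{\boldsymbol\tau}$ is always feasible: the Bayes identity $\sum_j q_{ij}\hat\tau_j=\lambda_i$ turns every structural constraint into an equality, and the box constraints are tight by construction, so the objective evaluates to $\sum_j n\hat\tau_j=n$. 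This gives the universal bound $\text{OPT}_{\text{LP}(\hat{\boldsymbol\tau})}\le n$, which upon substitution matches the theorem's claimed maximum; it thus remains to show the bound is achieved for all $\epsilon\le\ln(\lambda_{\max}/\lambda_1)$ and drops strictly beyond that threshold.

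To establish $\text{OPT}_{\text{LP}(\hat{\boldsymbol\tau})}=n$ in the claimed range, I would argue that any feasible $\vex$ must coincide with $n\hat{\boldsymbol\tau}$. Setting $\Delta_j=n\hat\tau_j-x_j\ge 0$ and subtracting the identity $\sum_j(q_{ij}-q_{1j})n\hat\tau_j=n(\lambda_i-\lambda_1)$ from the $i$-th structural constraint yields
$$\sum_{j=1}^m(q_{ij}-q_{1j})\Delta_j\le 0\qquad\text{for every }i\ge 2.$$
It therefore suffices to find a single index $i^*$ with $q_{i^*j}-q_{1j}>0$ for all $j$, which together with $\boldsymbol\Delta\ge \mathbf{0}$ forces $\boldsymbol\Delta=\mathbf{0}$. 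Choosing $i^*$ with $\lambda_{i^*}=\lambda_{\max}$ and reading off Eq~\eqref{eq:matrix-q}, a direct calculation shows $q_{i^*j}-q_{1j}>0$ for every $j\ne 1$ (the numerator always contains the positive term $\frac{1-\theta}{m}(\lambda_{\max}-\lambda_1)$, augmented by $\theta\lambda_{\max}$ on the diagonal), while for $j=1$ the sign is determined by $\frac{1-\theta}{m}(\lambda_{\max}-\lambda_1)-\theta\lambda_1$, which is strictly positive precisely when $\frac{\theta\lambda_1}{\lambda_{\max}-\lambda_1}<\frac{1-\theta}{m}$; substituting $\theta=1-m/(m-1+e^\epsilon)$ and simplifying converts this to $\epsilon<\ln(\lambda_{\max}/\lambda_1)$.

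The main obstacle is confirming that the stated range is genuinely maximal, i.e., that the LP optimum strictly drops below $n$ once $\epsilon$ crosses the threshold. I would handle this by exhibiting an explicit cheaper feasible solution, namely $x_1=0$ and $x_j=n\hat\tau_j$ for $j\ge 2$: the point is that for $\epsilon>\ln(\lambda_{\max}/\lambda_1)$ one has $q_{i1}-q_{11}<0$ for every $i\ge 2$ (the condition becomes even easier to violate for $\lambda_i<\lambda_{\max}$), so each structural constraint turns into $n(\lambda_i-\lambda_1)-(q_{i1}-q_{11})n\hat\tau_1\ge n(\lambda_i-\lambda_1)$, which is satisfied, while the objective drops to $n(1-\hat\tau_1)<n$. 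The boundary $\epsilon=\ln(\lambda_{\max}/\lambda_1)$ is then absorbed by continuity of $\text{OPT}_{\text{LP}(\hat{\boldsymbol\tau})}$ in $\theta$, completing the characterization.
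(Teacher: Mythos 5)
Your proposal is correct in substance but takes a genuinely different route from the paper. The paper substitutes $y_i=x_i/(n\hat\tau_i)$, rewrites $\text{LP}(\hat{\boldsymbol\tau})$ explicitly in the randomized-response parameters, and proves two lemmas: below the threshold the only feasible point of the rewritten LP is $y_1=\dots=y_m=1$ (Lemma~\ref{lemma:no-sol}, by showing the left side of each constraint is at most the coefficient sum $\theta+2(1-\theta)/m$ while the right side is at least that quantity), and above the threshold it constructs an explicit solution with $y_1=0$ whose objective is at most $1-\theta(1-\theta)\delta/2$ (Lemma~\ref{lemma:yes-sol}). You replace the first step with the cleaner observation that the total-probability identity $\sum_j q_{ij}\hat\tau_j=\lambda_i$ makes $\vex=n\hat{\boldsymbol\tau}$ the all-tight feasible point, and then force uniqueness through the single row $i^*$ with $\lambda_{i^*}=\lambda_{\max}$, for which $q_{i^*j}-q_{1j}>0$ for every $j$ exactly when $\epsilon<\ln(\lambda_{\max}/\lambda_1)$; and you replace the second step with the simpler witness $x_1=0$, $x_j=n\hat\tau_j$ for $j\ge 2$. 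Your version is more elementary and avoids the change of variables; the paper's Lemma~\ref{lemma:yes-sol} buys a quantitative rate of decrease, $1-\Omega(\delta)$, that your qualitative ``strictly below $n$'' does not, though the theorem as stated does not need it.

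Two caveats. First, your continuity argument at the boundary $\epsilon=\ln(\lambda_{\max}/\lambda_1)$ does not work: at equality the coefficient $q_{i^*1}-q_{11}$ vanishes, so $\Delta_1$ is no longer forced to zero, and in fact $x_1=0$, $x_j=n\hat\tau_j$ ($j\ge 2$) is already feasible there, so $\text{OPT}_{\text{LP}(\hat{\boldsymbol\tau})}$ jumps from $n$ to at most $n(1-\hat\tau_1)$ --- the optimum is \emph{not} continuous in $\theta$ at that point (the $m=2$ closed form, which gives $n(1+\phi^2)/2<n$ at $\theta=\phi$, confirms the jump). The theorem's ``$\le$'' should really be ``$<$''; the paper's own Lemma~\ref{lemma:no-sol} has the identical gap (the coefficient of $y_1$ vanishes at equality, so $y_1=1$ is not forced), so this is a shared blemish rather than a defect specific to your approach, but you should not paper over it with continuity. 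Second, your computed maximum is $1/\sum_{i=2}^m(\lambda_i-\lambda_1)$, which does not literally match the theorem's $1/(1-\sum_{i=2}^m(\lambda_i-\lambda_1))$; the $m=2$ formula $1/\phi$ indicates the theorem statement carries a typo and your value is the intended one, but your claim that the two ``match'' needs that remark.
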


Theorem~\ref{thm:multi-plu} implies that, by choosing $\epsilon\le \ln\frac{\lambda_{\max}}{\lambda_1}$ we can prevent the voting system from any potential electoral control by deleting voters. The threshold $\ln\frac{\lambda_{\max}}{\lambda_1}$ coincides with the case of $m=2$, in which case by plugging in $\lambda_{\max}=\lambda_2$ and $\phi=\lambda_2-\lambda_1$ we get exactly the threshold of $\ln\frac{1+\phi}{1-\phi}$.

The following part of this subsection is devote to the proof of Theorem~\ref{thm:multi-plu}. We first rewrite $\text{LP}(\boldsymbol{\hat{\tau}})$ by plugging in all the parameters.
\begin{subequations}
\begin{eqnarray}
&\hspace{-10mm}\min& \sum_{i=1}^m x_i \nonumber\\
%&\hspace{-15mm}s.t.&  \hspace{-5mm} \hat{\Gamma}_i=n\lambda_i-\sum_{j=1}^m q_{ij}x_j \qquad\qquad \forall\; 1\le i \le m \label{mip:1}\\
&&  \frac{\theta\lambda_ix_i}{n\left(\frac{1-\theta}{m}+\theta\lambda_i\right)}- \frac{\theta\lambda_1x_1}{n\left(\frac{1-\theta}{m}+\theta\lambda_1\right)} \nonumber\\  &+& \frac{1-\theta}{m}(\lambda_i-\lambda_1)\sum_{j=1}^m\frac{x_j}{n\left(\frac{1-\theta}{m}+\theta\lambda_j\right)}\le n(\lambda_i-\lambda_1) 
\nonumber\\ &&\hspace{51mm} \forall\; 2\le i\le m \nonumber\\
&&  x_i\leq n\left(\frac{1-\theta}{m}+\theta\lambda_i\right) \qquad\qquad\quad \forall\;
1\le i \le m \nonumber
\end{eqnarray}
\end{subequations}

Define $y_i=\frac{x_i}{n\left(\frac{1-\theta}{m}+\theta\lambda_i\right)}\in [0,1]$, then $\text{LP}(\boldsymbol{\hat{\tau}})$ is equivalent to the following $\text{LP}_{\vey}$:
\begin{subequations}
\begin{eqnarray}
&\hspace{-10mm}\min& \sum_{i=1}^m \left(\frac{1-\theta}{m}+\theta\lambda_i\right)y_i \nonumber\\
%&\hspace{-15mm}s.t.&  \hspace{-5mm} \hat{\Gamma}_i=n\lambda_i-\sum_{j=1}^m q_{ij}x_j \qquad\qquad \forall\; 1\le i \le m \label{mip:1}\\
&& \left(\frac{\theta\lambda_i}{\lambda_i-\lambda_1}+\frac{1-\theta}{m}\right)y_i-\left(\frac{\theta\lambda_1}{\lambda_i-\lambda_1}-\frac{1-\theta}{m}\right)y_1 \nonumber\\&\ge& 1-\sum_{j\neq 1, i}\frac{1-\theta}{m}y_j  \hspace{20mm}\forall\; 2\le i\le m \label{eq:y1}\\
&&  0\le y_i\le 1 \hspace{29mm} \forall\;
1\le i \le m \nonumber
\end{eqnarray}
\end{subequations}

The proof of Theorem~\ref{thm:multi-plu} relies on the following lemmas.

\begin{lemma}\label{lemma:no-sol}
If $\lambda_i>\lambda_1$ for all $2\le i\le m$, and for some $i=i_0$ it holds that $\frac{\theta\lambda_{1}}{\lambda_{i_0}-\lambda_1}\le \frac{1-\theta}{m}$, then there is only one feasible solution to $\text{LP}_{\vey}$, which is $y_1=y_2=\cdots=y_m=1$.
\end{lemma}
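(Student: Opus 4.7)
The plan is to show that a single constraint of $\text{LP}_{\vey}$, namely Eq.~\eqref{eq:y1} for $i=i_0$, already pins down $\vey$ to $\vec{1}$. I would begin by rewriting that constraint in the form
\[
c_{i_0}\,y_{i_0}+c_1\,y_1+\sum_{j\ne 1,i_0}c_j\,y_j\;\ge\;1,
\]
where
$c_{i_0}=\tfrac{\theta\lambda_{i_0}}{\lambda_{i_0}-\lambda_1}+\tfrac{1-\theta}{m}$,
$c_1=\tfrac{1-\theta}{m}-\tfrac{\theta\lambda_1}{\lambda_{i_0}-\lambda_1}$, and
$c_j=\tfrac{1-\theta}{m}$ for every other $j$.

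The first key observation is that the hypothesis $\tfrac{\theta\lambda_1}{\lambda_{i_0}-\lambda_1}\le \tfrac{1-\theta}{m}$ is exactly the statement that $c_1\ge 0$; since the other $c_j$'s are manifestly positive, the entire vector of coefficients is nonnegative. The second key observation, which is the algebraic heart of the argument, is that these coefficients sum to exactly $1$. Indeed, $c_{i_0}+c_1 = \tfrac{\theta(\lambda_{i_0}-\lambda_1)}{\lambda_{i_0}-\lambda_1}+\tfrac{2(1-\theta)}{m}=\theta+\tfrac{2(1-\theta)}{m}$, and adding in the $m-2$ remaining terms $\tfrac{1-\theta}{m}$ gives $\theta+(1-\theta)=1$. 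This is the computation I expect to be routine but essential; it is equivalent to noticing that $\vey=\vec 1$ satisfies the $i_0$-th constraint with equality.

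Once those two facts are in place, the rest is mechanical. Since every $y_j\in[0,1]$ and every $c_j\ge 0$, we have $\sum_j c_j y_j\le\sum_j c_j=1$; combined with the constraint $\sum_j c_j y_j\ge 1$, equality must hold throughout, which forces $y_j=1$ for every $j$ with $c_j>0$. Under the strict version of the hypothesis $\tfrac{\theta\lambda_1}{\lambda_{i_0}-\lambda_1}<\tfrac{1-\theta}{m}$, all coefficients are strictly positive and we are done immediately with $\vey=\vec 1$.

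The main obstacle I foresee is the boundary case $\tfrac{\theta\lambda_1}{\lambda_{i_0}-\lambda_1}=\tfrac{1-\theta}{m}$, where $c_1=0$ and the $i_0$-th constraint alone leaves $y_1$ unconstrained. To close this, I would bring in a second constraint~\eqref{eq:y1} for some $i'\ne i_0$: substituting $y_j=1$ for all $j\ne 1$ (already forced) reduces that constraint to a single inequality of the form $(\tfrac{\theta\lambda_1}{\lambda_{i'}-\lambda_1}-\tfrac{1-\theta}{m})(1-y_1)\ge 0$ after cancellation. If there exists any $i'$ with $\lambda_{i'}<\lambda_{i_0}$ this factor is strictly positive, and we conclude $y_1=1$; otherwise $\lambda_{i_0}$ is tied with all other $\lambda_i$ at the maximum, in which case a direct plug-in shows $\vey=\vec 1$ is still the unique solution. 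Either way, the boundary case is handled by the same algebraic identity, and the feasible region collapses to the single point $\vec 1$.
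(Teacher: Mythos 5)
Your main argument is the paper's argument in different clothing: the paper bounds the left side of constraint~\eqref{eq:y1} for $i=i_0$ above by $\theta+\tfrac{2(1-\theta)}{m}$ (using $y_{i_0},y_1\le 1$ together with the nonnegativity of the $y_1$-coefficient, which is exactly where the hypothesis $\tfrac{\theta\lambda_1}{\lambda_{i_0}-\lambda_1}\le\tfrac{1-\theta}{m}$ enters) and bounds the right side below by the same quantity, then concludes that all inequalities are tight. Your observation that the coefficients are nonnegative and sum to one is the same computation after moving the sum $\sum_{j\ne 1,i_0}\tfrac{1-\theta}{m}y_j$ to the left-hand side. So in the generic case of strict inequality, where the coefficient of $y_1$ is strictly positive, your proof is correct and essentially identical to the paper's.

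Your patch for the boundary case $\tfrac{\theta\lambda_1}{\lambda_{i_0}-\lambda_1}=\tfrac{1-\theta}{m}$ does not work, however --- and, to your credit, you are the only one who noticed the case exists: the paper's proof silently asserts that tightness forces $y_1=1$ even when the $y_1$-coefficient vanishes. The flaw is a sign slip. After substituting $y_j=1$ for all $j\ne 1$, constraint~\eqref{eq:y1} for $i'$ reduces to $\bigl(\tfrac{\theta\lambda_1}{\lambda_{i'}-\lambda_1}-\tfrac{1-\theta}{m}\bigr)(1-y_1)\ge 0$; when $\lambda_{i'}<\lambda_{i_0}$ the first factor is strictly \emph{positive}, so the inequality holds for every $y_1\in[0,1]$ and forces nothing. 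It is a strictly \emph{negative} factor, i.e.\ $\lambda_{i'}>\lambda_{i_0}$, that would force $y_1=1$; but then the hypothesis holds strictly for $i'$ and one should simply have run the main argument on $i'$ instead. In the remaining sub-case ($\lambda_{i_0}=\lambda_{\max}$ with equality in the hypothesis), your claim that a direct plug-in still yields uniqueness is false: the point $y_1=0$, $y_2=\cdots=y_m=1$ is feasible, since each constraint then reduces to $\tfrac{\theta\lambda_1}{\lambda_i-\lambda_1}\ge\tfrac{1-\theta}{m}$, which holds for every $i$ precisely at the boundary. (For $m=2$ this is the degenerate point $\theta=\lambda_2-\lambda_1$ that the paper explicitly sets aside in its two-candidate analysis.) So the lemma is genuinely false at the exact boundary and no patch can rescue it there; the honest repair is to state the hypothesis with strict inequality, which costs Theorem~\ref{thm:multi-plu} only a single value of $\epsilon$.
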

\begin{proof}
When $\frac{\theta\lambda_{1}}{\lambda_{i_0}-\lambda_1}\le \frac{1-\theta}{m}$, we know that 
\begin{eqnarray*}
&&\left(\frac{\theta\lambda_{i_0}}{\lambda_{i_0}-\lambda_1}+\frac{1-\theta}{m}\right)y_{i_0}-\left(\frac{\theta\lambda_1}{\lambda_{i_0}-\lambda_1}-\frac{1-\theta}{m}\right)y_1\\
&\le& \frac{\theta\lambda_{i_0}}{\lambda_{i_0}-\lambda_1}-\frac{\theta\lambda_1}{\lambda_{i_0}-\lambda_1}+\frac{2(1-\theta)}{m}\\
&=&\theta+\frac{2(1-\theta)}{m}
\end{eqnarray*}
In the meantime,
We have that
$$1-\sum_{j\neq 1, i}\frac{1-\theta}{m}y_j \ge 1-\sum_{j\neq 1, i}\frac{1-\theta}{m}=\theta+\frac{2(1-\theta)}{m}.$$
Hence, if Eq~\eqref{eq:y1} is true, then all inequalities become tight and we get $y_1=y_2=\cdots=y_m=1$.
\end{proof}

\begin{comment}
    \noindent\textbf{Remark.} Given that $y_1=y_2=\cdots=y_m=1$ is the only solution to $\text{ILP}(\boldsymbol{\hat{\tau}})$ with the objective value of $n$, if we consider the upper and lower bounding ILPs, $\overline{\text{ILP}}(\boldsymbol{\tau},\delta)$ and $\underline{\text{ILP}}(\boldsymbol{\tau},\delta)$, then their objective value is at least $(1-\mathcal{O}(\delta))n$ where $\delta\to 0$ when $n\to \infty$. This means that to make sure the designated candidate becomes a co-winner with non-negligible probability, the attacker would have to delete more than $99.9\%$ voters (i.e., the number of voters not deleted is $o(n)$). Also notice that if $\frac{\theta\lambda_{i_0}}{\lambda_{i_0}-\lambda_1}\le \frac{1-\theta}{m}$ for some $i=i_0$, then it also holds that $\frac{\theta\lambda_{\max}}{\lambda_{\max}-\lambda_1}\le \frac{1-\theta}{m}$ where $\lambda_{\max}=\max_i\lambda_i$. %\text{OPT}_{\overline{\text{ILP}}(\boldsymbol{\tau},\delta)}

%$y_1=y_2=\cdots=y_m=1$ means that the only way for making the designated candidate be a co-winner is to delete all the voters. 
\end{comment}
As a complement to Lemma~\ref{lemma:no-sol}, we have the following.
\begin{lemma}\label{lemma:yes-sol}
If $\lambda_i>\lambda_1$ and $\frac{\theta\lambda_{1}}{\lambda_{i}-\lambda_1}> \frac{1-\theta}{m}$ hold for all $2\le i\le m$, then the optimal objective value of $\text{LP}_{\vey}$ is strictly smaller than $1$. In particular, let 
$\delta=\min_i \{\frac{\theta\lambda_{1}}{\lambda_{i}-\lambda_1}- \frac{1-\theta}{m}\}$, then the optimal objective value of $\text{LP}_{\vey}$ is at most $1-\theta(1-\theta)\delta/2=1-\Omega(\delta)$.
%the optimal solution to $\text{ILP}_{\vey}$ is given by $y_1^*=0$, and 
\end{lemma}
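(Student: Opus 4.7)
The plan is to exhibit an explicit feasible solution of $\text{LP}_{\vey}$ whose objective value strictly undershoots $1$ by at least $\theta(1-\theta)\delta/2$. The natural candidate, motivated by Lemma~\ref{lemma:no-sol}'s analysis, is $y_1^*=0$ and $y_j^*=1/(1+\delta)$ for every $2\le j\le m$: since $\vey=\boldsymbol{1}$ makes every constraint of $\text{LP}_{\vey}$ tight, pushing $y_1$ down from $1$ to $0$ opens up slack exactly equal to $\delta_i:=\frac{\theta\lambda_1}{\lambda_i-\lambda_1}-\frac{1-\theta}{m}$ in the $i$-th constraint, which can then be spent on uniformly shrinking the remaining coordinates.

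To check feasibility, I would rewrite constraint~\eqref{eq:y1} as $A_i y_i - \delta_i y_1 + \frac{1-\theta}{m}\sum_{j\ne 1,i}y_j \ge 1$, where $A_i = \frac{\theta\lambda_i}{\lambda_i-\lambda_1}+\frac{1-\theta}{m}$. A direct calculation gives the key identity $A_i + \frac{(m-2)(1-\theta)}{m} = 1 + \delta_i$, which is essentially the statement that $\vey=\boldsymbol{1}$ makes the inequality tight. Substituting $\vey^*$ into the $i$-th constraint then yields LHS $= (1+\delta_i)/(1+\delta)$, which is $\ge 1$ because $\delta_i \ge \delta$ by the very definition of $\delta$. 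Writing $w_j = \frac{1-\theta}{m} + \theta\lambda_j$ so that $\sum_j w_j = 1$, the objective at $\vey^*$ equals $\sum_{j\ne 1}w_j/(1+\delta) = (1-w_1)/(1+\delta)$, which is strictly below $1/(1+\delta) < 1$ since $w_1 = \frac{1-\theta}{m}+\theta\lambda_1 > 0$; this already gives the strict inequality part of the lemma.

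For the quantitative bound it remains to verify $(1-w_1)/(1+\delta) \le 1 - \theta(1-\theta)\delta/2$; bounding $1-w_1 \le 1$, this reduces to $1/(1+\delta) \le 1-\theta(1-\theta)\delta/2$, i.e.\ $\theta(1-\theta)(1+\delta) \le 2$, which follows from $\theta(1-\theta) \le 1/4$ in the regime where the $1-\Omega(\delta)$ statement is the meaningful one (say $\delta \le 2/(\theta(1-\theta))-1$, which in particular always covers $\delta\le 7$). The only real subtlety is pinning down the algebraic identity $A_i + (m-2)(1-\theta)/m = 1+\delta_i$, which is the hinge that makes the uniform rescaling $y_j^* = 1/(1+\delta)$ optimal and gives constraint $i$ slack precisely $\delta_i/(1+\delta)$; everything else is arithmetic on the explicit solution.
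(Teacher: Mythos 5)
Your proof is correct and follows the same overall strategy as the paper's own proof---set $y_1=0$ and exhibit an explicit feasible point of $\text{LP}_{\vey}$ whose objective undershoots $1$ by $\Omega(\delta)$---but with a genuinely different witness. The paper uses the non-uniform point $y_1'=0$, $y_i'=\bigl(1-\frac{1-\theta}{m}t\bigr)\frac{\lambda_i-\lambda_1}{\theta\lambda_i}$, chosen so that every constraint~\eqref{eq:y1} holds with equality, and then separately bounds $t$ and the objective; you take the uniform point $y_j^*=1/(1+\delta)$ and verify feasibility via the identity $A_i+\frac{(m-2)(1-\theta)}{m}=1+\delta_i$, which indeed reduces to $\theta+(1-\theta)=1$ and makes constraint $i$ evaluate to $(1+\delta_i)/(1+\delta)\ge 1$. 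Your computation is shorter, and your intermediate bound $(1-w_1)/(1+\delta)\le 1/(1+\delta)$ is actually a stronger upper bound than the paper's $\frac{1+\theta^2\delta}{1+\theta\delta}$ when $\theta$ is small. The one point to be aware of is shared by both arguments: your final step needs $\theta(1-\theta)(1+\delta)\le 2$, while the paper's final step ($\frac{1+\theta^2\delta}{1+\theta\delta}\le 1-\frac{\theta(1-\theta)}{2}\delta$) needs $\theta\delta\le 1$; neither argument---nor the lemma as literally stated---can hold when $\delta$ is so large that $1-\theta(1-\theta)\delta/2$ falls below the (strictly positive) optimum, so your explicit restriction to the regime where the bound is meaningful is not a defect relative to the paper, though it would be cleaner to state the quantitative claim as, say, $1-\frac{\delta}{1+\delta}$, which is valid for all $\delta$.
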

\begin{proof}
Consider the following solution:
$y_1'=0$, and
$$y_i'=\left(1-\frac{1-\theta}{m}t\right)\frac{\lambda_i-\lambda_1}{\theta\lambda_i},$$
where $$t=\frac{\sum_{i=2}^m\frac{\lambda_i-\lambda_1}{\theta\lambda_i}}{1+\frac{1-\theta}{m}\sum_{i=2}^m\frac{\lambda_i-\lambda_1}{\theta\lambda_i}}.$$

We first show that $\vey'$ is a feasible solution. To see this, note that by the definition of $t$ it holds that:
$$t= \left(1-\frac{1-\theta}{m}t\right)\sum_{i=2}^m\frac{\lambda_i-\lambda_1}{\theta\lambda_i}.$$
Note that the right side is exactly $\sum_i y_i'$, therefore, 
$$\sum_i y_i'=t.$$
Consequently, 
$$\frac{\theta\lambda_i}{\lambda_i-\lambda_1}y_i'=1-\frac{1-\theta}{m}t=1-\frac{1-\theta}{m}\sum_i y_i',$$
implying that Eq~\eqref{eq:y1} is satisfied. To summarize, $\vey'$ is feasible to $\text{ILP}_{\vey}$. 

Next, we estimate the value of $t$. Since $\frac{\theta\lambda_{1}}{\lambda_{i}-\lambda_1}- \frac{1-\theta}{m}\ge \delta$, we have that
$$\frac{\lambda_i-\lambda_1}{\theta\lambda_i}\le \frac{m}{1-\theta+(1+\delta)m\theta}, \quad \forall i\ge 2$$
Consequently,
$$\sum_{i=2}^m\frac{\lambda_i-\lambda_1}{\theta\lambda_i}\le \frac{m(m-1)}{1-\theta+(1+\delta)m\theta}.$$
Plugging it into the definition of $t$, we get
$$t\le \frac{\frac{m(m-1)}{1-\theta+(1+\delta)m\theta}}{1+\frac{1-\theta}{m}\frac{m(m-1)}{1-\theta+(1+\delta)m\theta}}=\frac{m-1}{\delta\theta+1}.$$

Finally, we estimate the objective value of $\vey'$. We have that
\begin{eqnarray*}
\sum_{i=1}^m \left(\frac{1-\theta}{m}+\theta\lambda_i\right)y_i'&=&\frac{1-\theta}{m}t+\theta\sum_{i=1}^m\lambda_iy_i'\\&\le& \frac{1-\theta}{m}t+\theta\\&=&\frac{1+\theta^2\delta-\frac{1-\theta}{m}}{1+\theta\delta}\\
&\le& \frac{1+\theta^2\delta}{1+\theta\delta}
\le 1-\frac{\theta(1-\theta)}{2}\delta
\end{eqnarray*}

%First, as $\frac{\theta\lambda_{i}}{\lambda_{i}-\lambda_1}> \frac{1-\theta}{m}$, we know the coefficient of $y_1$ is always negative in Eq~\eqref{eq:y1}, hence, if $y_1>0$, by reducing $y_1$ to $0$ Eq~\eqref{eq:y1} still holds, while the objective value decreases. Hence, $y_1^*=0$. 

\end{proof}

Plug in the definition of $\theta=1-\frac{m}{m-1+e^{\epsilon}}$ for $\epsilon$-LDP, we can directly have Theorem~\ref{thm:multi-plu}.

\section{Numeric Experiments}

In this section, we demonstrate PoLDP through numeric experiments\footnote{All experiments are performed on a computer with an Intel i7-6700 processor and 32GB memory. The code is available at https://github.com/polyapp/poldp}. In our experiments, the number of voters is set to be $n=10^8$. The number of candidates is set to be $m=2;5$. In this experiment, we generate the true type of each voter using two methods. 

\begin{itemize}
    \item The first method guarantees the difference in score between the designated candidate and the winner equals $m\phi$ where the true type of each voter is randomly generated from $[1,m]$. 
    
    \item The second method randomly generates the true type of each voter according to the real-world data---Sushi Data~\cite{kamishima2003nantonac} which is a commonly used data set for generating preferences~\cite{azari2012random}.
\end{itemize}

For ease of notation, we denote by $\mathcal{T}_{m}^{\phi}$ the voting system generated by the first method, and $\mathcal{T}_{m}$ the voting system generated by the second method. For each kind of voting systems, we generate $2000$ instances. 

The parameter is set to be $\xi=0.999$, $\delta=0.001$, $\epsilon=0.001$. %{\color{blue}(current $\epsilon=0$)}. 
We observe that randomized response and Laplace mechanism on our randomly generated voting systems hold that:

$$\text{avg}[\frac{|\text{OPT}_{\overline{\text{LP}}((1-\epsilon)\boldsymbol{\hat{\tau}},\delta)}-\text{OPT}_{\text{LP}(\boldsymbol{\hat{\tau}})}|}{\text{OPT}_{\text{LP}(\boldsymbol{\hat{\tau}})}}]\leq 5\%,$$
and 
$$\text{avg}[\frac{|\text{OPT}_{\underline{\text{LP}}((1+\epsilon)\boldsymbol{\hat{\tau}},\delta)}-\text{OPT}_{\text{LP}(\boldsymbol{\hat{\tau}})}|}{\text{OPT}_{\text{LP}(\boldsymbol{\hat{\tau}})}}]\leq 5\%.$$

This result suggests if we compare two solutions, one guarantees that the designated candidate wins by at least 99.9\%, and the other guarantees that the designated candidate wins by at most 0.1\%, then the two manipulation costs differ by at most $10\%$. In other words, on the voting systems we created $\text{PoLDP}(\mathcal{S},\mathcal{R},0.1\%)$ and $\text{PoLDP}(\mathcal{S},\mathcal{R},99.9\%)$ differ by at most $10\%$ in average. Based on this, we simply calculate the manipulation cost with LDP mechanisms using $\text{OPT}_{\text{LP}(\boldsymbol{\hat{\tau}})}$.
%Based on this, we treat voting systems with {\it plurality}, {\it 3-approval}, {\it borda} rules as stable under the randomized response and Laplace mechanism on the voting systems we generated. Consequently, we simply calculate PoLDP by using $f(\mathcal{S},\mathcal{R},99.9\%)$.

For voting systems with two candidates, we perform the experiments on voting systems $\mathcal{T}^{\phi}_2$ for $\phi=0.1;0.2;0.3;0.4$. We know that once $\phi$ is fixed, all instances in $\mathcal{T}^{\phi}_2$ have the same $\boldsymbol{\lambda}$. For $\epsilon=0,0.01,\dots,0.49,5$, we calculate the corresponding PoLDP of randomized response and Laplace mechanism and plot it in Figure~\ref{fig:poldp-m2p}. We observe that the curve ``PoLDP - $\epsilon$'' given by numeric experiments matches the mathematical closed-form expression of APoLDP given by Eq~\eqref{form:m2p-ran}, Eq~\eqref{form:m2p-lap}.

\begin{figure}[!ht]
	\begin{center}
		\includegraphics[width=85mm,height=62.5mm]{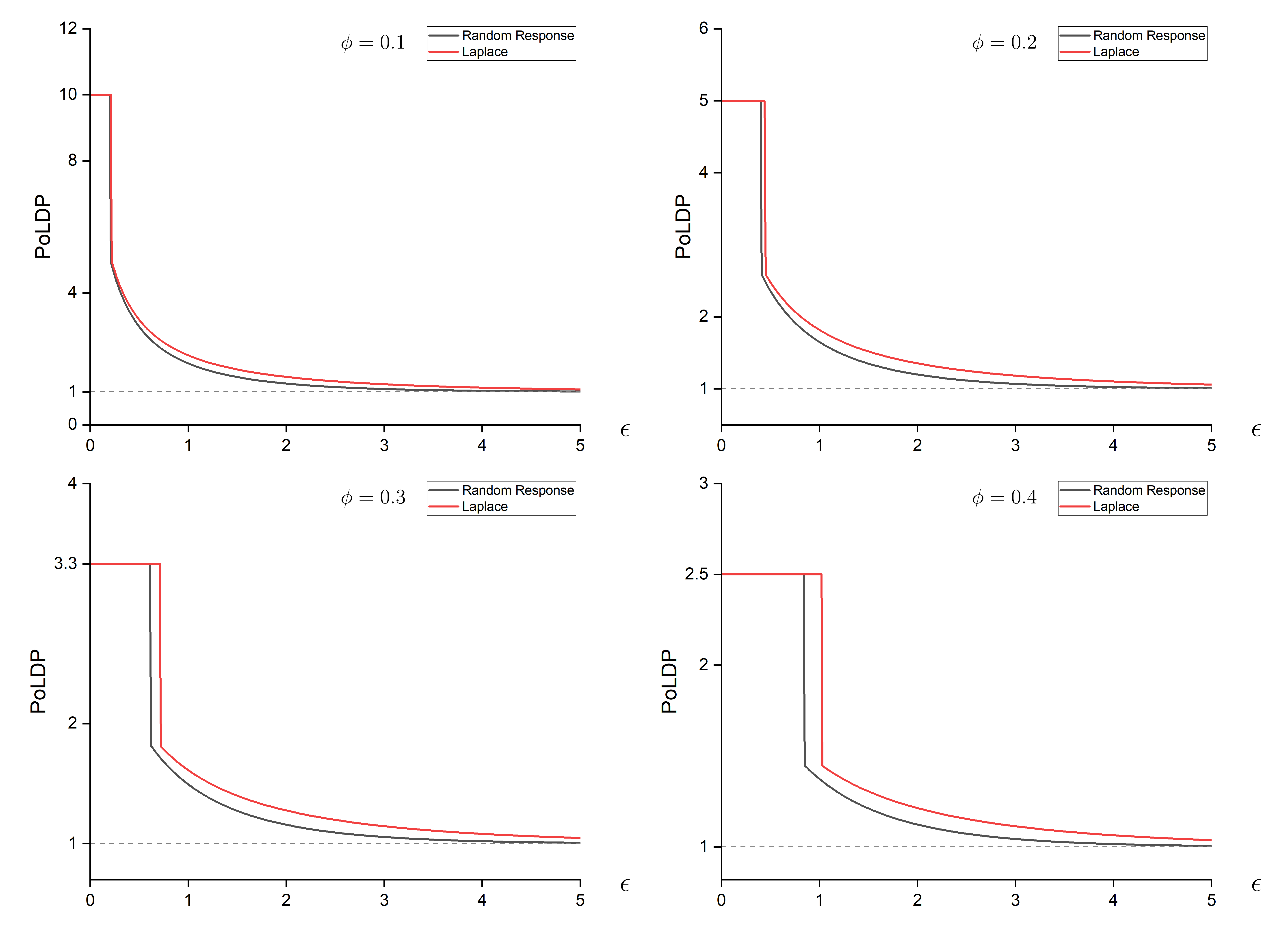}
		\caption{PoLDP of randomized response and Laplace mechanism on voting systems $\mathcal{T}_2^{\phi}$.}
		\label{fig:poldp-m2p}	
		\end{center}
\end{figure}

For voting systems with multiple candidates, we perform experiments on voting systems $\mathcal{T}_5$ and $\mathcal{T}_5^{\phi}$ for $\phi=0.1;0.2;0.3;0.4$. For $\epsilon=0,0.01,\dots,0.49,5$, we compare the efficiency of randomized response and Laplace mechanism from two aspects: the average and the standard deviation of PoLDP (solid lines represent the average, and dotted shadows represent the standard deviation); the percentage of instances which has PoLDP larger than 99\% of its maximal value (in short, the maximal percentage). We summarize the comparison in Figure~\ref{fig:poldp-m5p}-\ref{fig:poldp-secure-m5p-sushi}. 

\begin{figure}[!ht]
	\begin{center}
		\includegraphics[width=85mm,height=62.5mm]{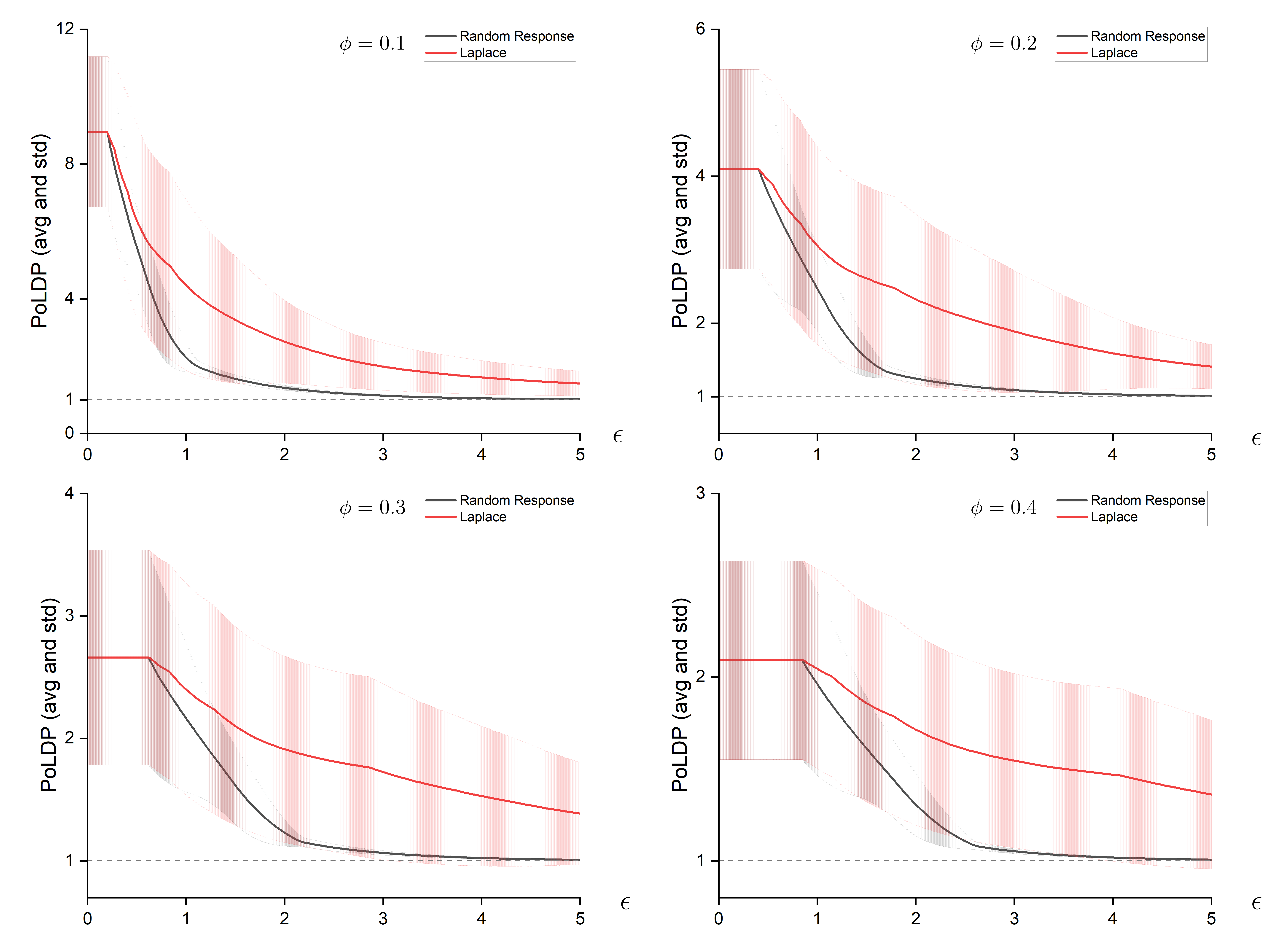}
		\caption{PoLDP of randomized response and Laplace mechanism on voting systems $\mathcal{T}_5^{\phi}$.}
		\label{fig:poldp-m5p}	
		\end{center}
\end{figure}

\begin{figure}[!ht]
	\begin{center}
		\includegraphics[width=85mm,height=59.31mm]{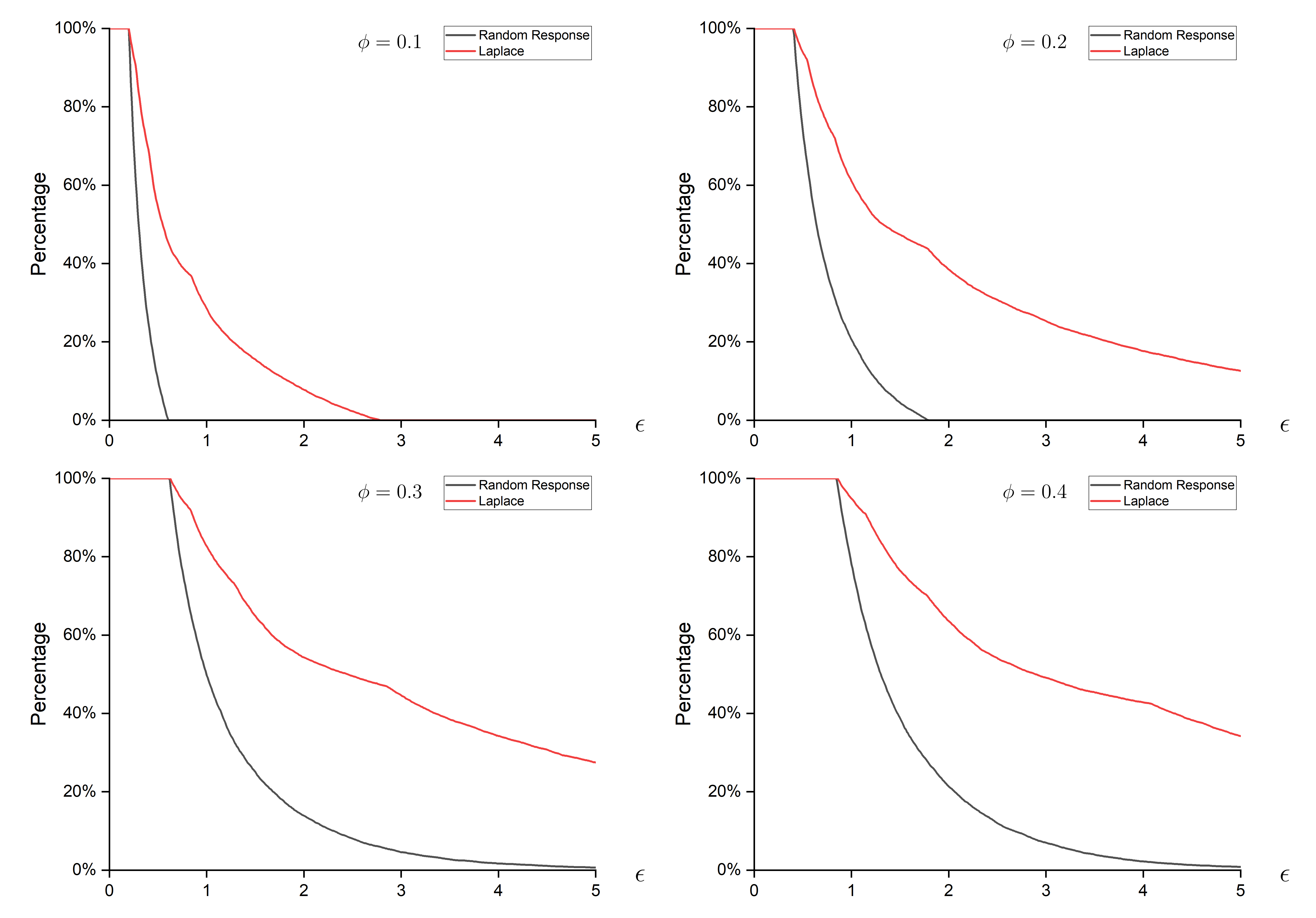}
		\caption{The maximal percentage of randomized response and Laplace mechanism on voting systems $\mathcal{T}_5^{\phi}$.}
		\label{fig:secure-m5p}	
		\end{center}
\end{figure}

\begin{figure}[!ht]
	\begin{center}
		\includegraphics[width=85mm,height=31mm]{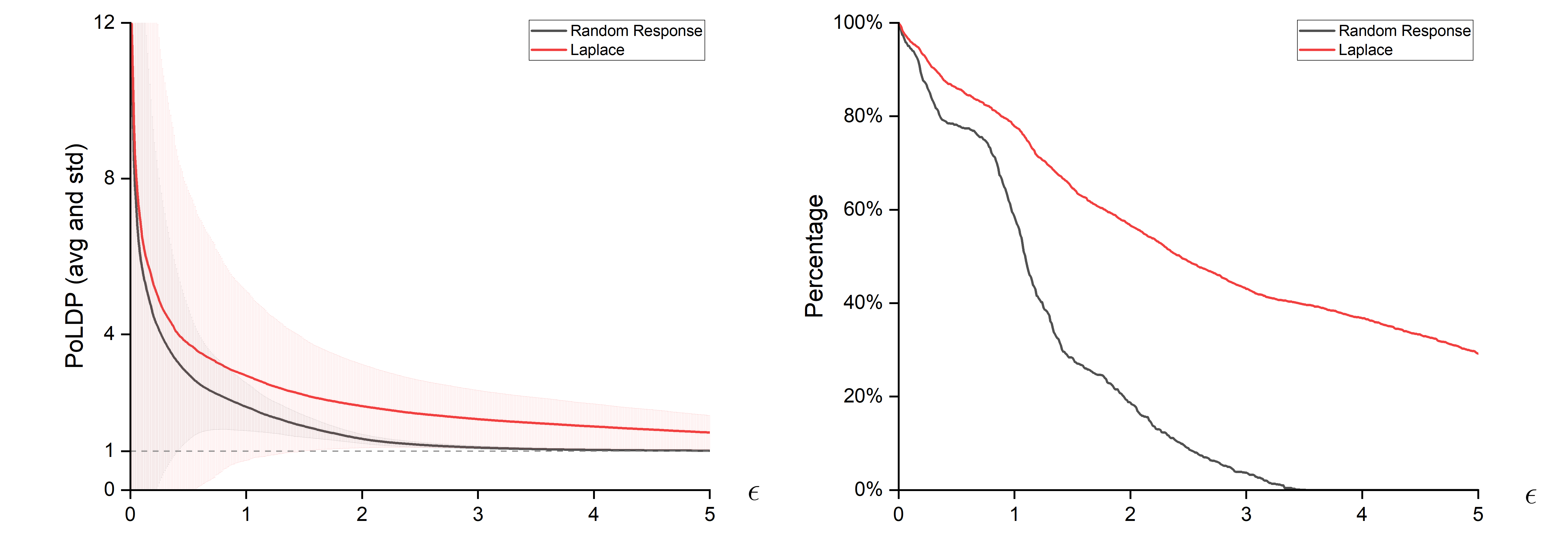}
		\caption{PoLDP and the maximal percentage of randomized response and Laplace mechanism on voting systems $\mathcal{T}_5$.}
		\label{fig:poldp-secure-m5p-sushi}	
		\end{center}
\end{figure}

We observe that the Laplace mechanism outperforms randomized response by always giving a larger PoLDP on all the voting systems we test. Moreover, the randomized mechanism is more sensitive to the selection of $\epsilon$ on voting systems $\mathcal{T}_5^{\phi}$. We can see from Figure~\ref{fig:poldp-m5p}, the average of PoLDP for randomized response decreases quicker than Laplace in the region $\epsilon\in[0,2]$. Interestingly, we also observe that the maximal percentage coincides with the value of PoLDP (e.g., the maximal percentage equals 100\% where PoLDP stays at its maximal value) on voting systems $\mathcal{T}_5^{\phi}$.

\section{Discussion}

In real-world applications, although the exact value of ${\ve\lambda}$ is unknown, we can use a two-round implementation of LDP in voting to estimate ${\ve\lambda}$: In the first round, a random subset $\mathcal{A}$ of voters are selected to report their types via the LDP mechanism; In the second round, all voters except those in $\mathcal{A}$, report their types via the LDP mechanism. We use the reported types given by voters in $\mathcal{A}$, to get an estimation of ${\ve\lambda}$. It is generally sufficient to obtain a good statistical estimation through a reasonably small sample~\cite{kenny1986statistics}. In particular, $|\mathcal{A}|$ can be $o(n)$ for sufficiently large $n$, and therefore it will only marginally affect the value of PoLDP.    

\section{Conclusion}

We provide the first systematic study on how LDP mechanisms may improve the resilience of a voting system under electoral control by deleting voters. The metric PoLDP introduced in this paper gives a general guidance towards the choice of the privacy parameter $\epsilon$ in LDP mechanisms.

\section*{Acknowledgments}

%Research of Liangde Tao was partly supported by ``New Generation of AI 2030" Major Project (2018AAA0100902). Research of Lin Chen was partly supported by NSF Grant 2004096.
Liangde Tao is partly supported by ``New Generation of AI 2030'' Major Project (2018AAA0100902). Lin Chen is partly supported by NSF Grant 2004096.

%% The file named.bst is a bibliography style file for BibTeX 0.99c
\bibliographystyle{named}
\bibliography{ijcai22}

\end{document}